\newtheorem{theorem}{Theorem}
\newtheorem{lemma}[theorem]{Lemma}
\newtheorem{proposition}[theorem]{Proposition}
\newtheorem{corollary}[theorem]{Corollary}	
\newtheorem{remark}[theorem]{Remark}	
\newtheorem{problem}[theorem]{Problem}
\newcommand{\kar}[1]{{\color{blue} #1}}
\DeclareRobustCommand{\qed}{%
	\ifmmode 
	\else \leavevmode\unskip\penalty9999 \hbox{}\nobreak\hfill
	\fi
	\quad\hbox{\qedsymbol}}
\newcommand{\openbox}{\leavevmode
	\hbox to.77778em{%
		\hfil\vrule
		\vbox to.675em{\hrule width.6em\vfil\hrule}%
		\vrule\hfil}}
\newcommand{\qedsymbol}{\openbox}
\newenvironment{proof}[1][\proofname]{\par
	\normalfont
	\topsep6\p@\@plus6\p@ \trivlist
	\item[\hskip\labelsep\itshape
	#1.]\ignorespaces
}{%
	\qed\endtrivlist
}
\newcommand{\proofname}{Proof}
\begin{document}
	\title{PDE-Based Optimization for Stochastic Mapping and Coverage Strategies using Robotic Ensembles}
\author[1]{Karthik Elamvazhuthi}
\author[2]{Hendrik Kuiper}            
\author[1]{Spring Berman}
\affil[1]{School for Engineering of Matter, Transport and Energy, Arizona State University, Tempe, AZ, USA}

\affil[2]{School of Mathematical and Statistical Sciences, Arizona State University, Tempe, AZ, USA}

	\maketitle
	
	\begin{abstract}
		This paper presents a novel partial differential equation (PDE)-based framework for controlling an ensemble of robots, which  have limited sensing and actuation capabilities and exhibit stochastic behaviors, to perform mapping and coverage tasks.  We model the ensemble population dynamics as an advection-diffusion-reaction PDE model and formulate the mapping and coverage tasks as identification and control problems for this model. In the mapping task, robots are deployed over a closed domain to gather data, which is unlocalized and independent of robot identities, for reconstructing the unknown spatial distribution of a region of interest.  We frame this task as a convex optimization problem whose solution represents the region as a spatially-dependent coefficient in the PDE model. We then consider a coverage problem in which the robots must perform a desired activity at a programmable probability rate to achieve a target spatial distribution of activity over the reconstructed region of interest.  We formulate this task as an optimal control problem in which the PDE model is expressed as a bilinear control system, with the robots' coverage activity rate and velocity field defined as the control inputs.  We validate our approach with simulations of a combined mapping and coverage scenario in two environments with three target coverage distributions.
		\end{abstract}

	\section{Introduction}
	
	%
	
	
	
	Over the past few decades, partial differential equation (PDE) models of multi-agent systems have been used extensively in mathematical biology to analyze collective behaviors such as chemotaxis, flocking, schooling, predator-prey interactions, and pattern formation  \cite{okubo1986dynamical}.  Many of these models are linear or nonlinear advection-diffusion type PDEs, which describe the spatiotemporal evolution of probability densities of agents. Mathematical tools such as bifurcation analysis, optimization, and control theory can be applied to these continuum {\it macroscopic models} to make qualitative and quantitative predictions about the system behavior.  
	Typically, each PDE model corresponds to a discrete {\it microscopic model} that captures the stochastic and deterministic actions of individual agents. While these microscopic models are more accurate descriptions of the agents' behavior, the macroscopic models enable tractable analysis for large agent numbers.
	
	
	
	Recently, this work has motivated the use of similar types of PDEs to model and control the spatiotemporal dynamics of very large collectives, or {\it swarms} \cite{Francesca2016}, of small, resource-constrained robots (e.g., \cite{Karydis2016,Sitti2015}) that are currently being developed for applications such as environmental monitoring, exploration, surveillance, disaster response, and biomedical procedures.  PDEs have been used to characterize the distributions of chemotactic robots in a diffusive fluid environment \cite{galstyan2005modeling}, miniature robots inspecting a model of jet turbine blades \cite{prorok2011multi}, and honeybee-inspired agents that aggregate at the optimal value of a scalar field \cite{correll2015probabilistic}.  The parameters of these PDE models can be mapped to control inputs that drive the robots' motion and probability rates of switching between states or tasks, and the collective behavior of the robots follows the PDE model prediction in expectation.  
	Several works have exploited this correspondence to control the spatial distribution of a ensemble \cite{milutinovic2006modeling,foderaro2014distributed}.  
	These control approaches can be viewed as extensions of stochastic task allocation schemes based on nonspatial rate equation models \cite{BermanTRO2009,correll2006system,Martinoli04,lerman2006analysis}.  Other applications of continuum population dynamical models to multi-agent control include optimized confinement strategies \cite{haque2014efficient}, consensus using the theory of mean field controls \cite{nourian2013nash}, controlled flocking \cite{piccoli2014control} that includes non-parallel motions \cite{hanstyled}, and pattern generation in the presence of obstacles \cite{pimenta2008control}. There has also been some recent work on using PDEs to model Laplacian network dynamics of agents for formation control; see \cite{frihauf2011leader,meurer2011finite,ElamvazhuthiDARS2014} and references therein. 
	
	We apply this PDE-based modeling framework to develop a control approach for  allocating tasks among an ensemble of robots.  
	In our scenarios, a {\it task} is defined as a desired activity that a robot performs in a certain spatial region of the environment. The tasks can be performed in parallel, and multiple robots can be simultaneously allocated to each task. While various deterministic approaches have been developed for multi-robot task allocation, including centralized and decentralized market-based techniques~\cite{Dias2006,choi-tro09} and centralized methods for optimal task assignment and trajectory planning \cite{adler2015efficient,turpin2014capt}, their computation and/or communication requirements do not scale well to very large numbers of robots and tasks.
	In contrast to these works, we develop a {\it stochastic} approach in which tasks are performed at random times by unidentified robots with limited computing capabilities and no communication or global localization. Such limitations will be common in swarm robotic platforms, e.g. micro aerial vehicles \cite{Karydis2016} and microrobots \cite{Sitti2015}, and in scenarios where the robots operate in GPS-denied environments where communication is impractical or unreliable. In our proposed approach, a task allocation emerges from the collective ensemble activity.
	
	We first consider a {\it mapping} problem in which the objective is to estimate a scalar spatial field from unlocalized data obtained by the robots. We then define a {\it coverage} problem in which the ensemble must produce a target spatial density of activity over a region of interest, which may be estimated in the {\it mapping} problem. For this problem, we express the PDE model as a {\it bilinear control system} \cite{ball1982controllability} and formulate an optimal control problem that computes the control inputs. Since we do not assume that agents are capable of global localization or estimation of the local agent population density,
	we frame the coverage problem as an open-loop control problem that does not require feedback on agent positions or densities. 
	We follow the variational approach described in \cite{troltzsch2010optimal} for optimal control of the PDE model. While there has been some prior work on bilinear optimal control of systems of PDEs \cite{ou2010stability,annunziato2014optimal}, these works do not address the types of PDEs that we consider. An optimal control problem for a bilinear parabolic PDE was formulated in \cite{ou2010stability} with the diffusion coefficient as the control. In \cite{annunziato2014optimal}, bilinear control of a class of advection-reaction systems was considered; unlike our PDE models, these systems did not include diffusion.
	
	The mapping and controller synthesis approaches described in this paper require a central supervisor with the computational capabilities necessary to solve the associated optimization problems. Despite this centralized component, the approaches are scalable with the number of agents in the ensemble since each agent executes the same controllers with the same control variables, which are preprogrammed or broadcast by the supervisor. 
	In our coverage strategy, there are only three control variables to be computed; in contrast, the most naive approach to controlling an ensemble of $N$ agents moving in $d$ dimensions would require the computation of $N^d$ control inputs.
	
	

	
	
	We first presented our {\it coverage} approach in \cite{elamvazhuthi2015optimal}, where we introduced a similar optimal control problem, derived the gradient of the objective functional with respect to the control parameters, and used a gradient descent algorithm to compute the optimal control.  This paper provides a complete analysis of our approach in \cite{elamvazhuthi2015optimal} by investigating the well-posedness of the PDE model and the optimal control problem.  The theory of weak solutions that we use to establish the well-posedness of the PDE model is classical \cite{evans1998partial}.  However, to the best of our knowledge, there have been no prior results on well-posedness that can be directly applied to our  model, which is a system of PDEs in which diffusion is present only in one of the species, the control variables are time-dependent, and a zero-flux boundary condition is imposed on the boundary of a Lipschitz domain. In this paper, we prove the existence and uniqueness of solutions of our PDE model by deriving suitable energy estimates for the solutions.
	We also use these derived energy estimates to ensure that the computation of the gradient, performed using the adjoint equation approach, is well-posed.  Moreover, we prove the existence of an optimal control for the problem using standard compactness arguments adapted to the PDE control setting \cite{troltzsch2010optimal}.  In addition to this analysis, our formulation of the {\it mapping} problem in the same framework is a novel contribution of this paper; in \cite{elamvazhuthi2015optimal} it was assumed that the environment is known beforehand.
	
	The paper is organized as follows. Section \ref{sec:TskDsc} describes the robot capabilities and their programmed behaviors during the mapping and coverage  assignments, 
	and  Section \ref{sec:modsec} defines the microscopic and macroscopic models of the ensemble and its activity during each  assignment. Section \ref{sec:mathprelim} defines key mathematical terminology that is used in Sections \ref{sec:optMap} and \ref{sec:optCoverage} to formulate and analyze the mapping and coverage objectives, respectively, as optimization problems that incorporate the macroscopic models.  We validate our approach in Section \ref{sec:simulat} with simulations in which a region of interest must first be mapped and then covered with a target distribution of robot activity, and we conclude in Section \ref{sec:conc}. 
	
	
	
	

	\section{Task Descriptions and Assumptions}
	\label{sec:TskDsc}
	We consider a scenario in which $\mathbf{(1)}$ a small number of agents must map a region of interest in an unknown, bounded environment, which we refer to as the {\it mapping assignment}, and then $\mathbf{(2)}$ a larger ensemble of agents must produce a target spatial distribution of activity within the mapped regions, which we call the {\it coverage assignment}.  For instance, this activity could consist of sensor measurements, or as in our previous work \cite{elamvazhuthi2015optimal}, contacts with flowers to effect crop pollination. The mapping and coverage assignments will be formulated in a decoupled manner by posing them as two separate optimization problems in terms of their associated mean-field PDE models. We will then demonstrate through numerical simulations that these two problems can be solved sequentially in order to achieve the desired coverage objective.
	
	\subsection{Robot capabilities}
	We assume that the agents lack global localization, inter-agent communication, and prior data about the environment.  Each agent is equipped with local sensing capabilities, allowing it to detect and distinguish between different types of regions within its sensing range, and a compass, enabling it to move with a specified heading.  Additionally, the agents have sufficient memory to store the times at which they record observations of regions of interest. Similarly, it is assumed that agents have sufficient memory to store time-dependent velocity and task-switching parameters for the coverage assignment. 
	
	Both the mapping and coverage assignments involve numerical optimization computations that are performed offline by an external supervisor.  After the mapping assignment, the supervisor collects recorded information from the robots and uses this information to reconstruct the map of the environment. We note that the supervisor does not have information on the individual identities of the robots.  Based on this map, the supervisor calculates the parameters of the agent behaviors for a specified coverage objective and broadcasts these parameters to the agents before they are deployed for the coverage assignment. This {\it broadcast  architecture} has been previously proposed for controlling large numbers of robots \cite{michael2010architecture}. Although the external supervisor constitutes a potential single point of failure, it enables the control of large ensembles of agents that are subject to the constraints that we consider.
	
	
	
	
	\subsection{Robot controllers} \label{sec:ControlPolicies}
	The agents traverse the environment, a bounded open connected set $\Omega \subset \mathbb{R}^2$ with Lipschitz continuous boundary $\partial\Omega$, with a combination of deterministic and stochastic motion during a deployment. Each agent moves with a time-dependent velocity field $\mathbf{v}(t) \in \mathbb{R}^2$, which may be broadcast to the agents or imposed on them using external stimuli, such as magnetic fields in microrobotic applications.  Concurrently, each agent exhibits random motion that may be programmed, for instance to perform probabilistic search and tracking, or that arises from inherent sensor and actuator noise.  We assume that this random movement can be modeled as a Brownian motion that drives diffusion with an associated diffusion coefficient $D$. This approach to modeling the motion of members of a robotic swarm has, for example, been used previously in \cite{hamann2008framework}.
	
	The agents switch stochastically between behavioral states at constant or time-dependent {\it transition rates}, which define the probability per unit time for an agent to switch from one state to another.  We define a {\it region of interest}, which may be disconnected and is assumed to be Lebesgue measurable, as $\Gamma \subset \Omega$.  During {\it mapping}, a continually moving agent records observations of $\Gamma$ at a constant rate $k_o$.  During {\it coverage}, an agent that is moving over $\Gamma$ pauses to perform an action (such as a sensor measurement) at a time-dependent rate $k(t)$, and it resumes moving at a constant rate $k_f$, which determines the time taken to complete the action.  This coverage strategy can be extended to scenarios where there are different types of regions, and the agents perform actions over each region at a different rate, as in our prior work \cite{elamvazhuthi2015optimal}.  
	
	
	We specify that the agents' velocity field and transition rates are controllable parameters. In the {\it mapping} assignment, $\mathbf{v}(t)$ is designed to ensure thorough exploration of the environment (for instance, following a lawnmower pattern), and $k_o$ is assigned a high value to yield a large number of observations and thus produce an accurate map.  In the {\it coverage} assignment, $k_f$ is fixed while $\mathbf{v}(t)$ and $k(t)$ are computed prior to the agents' deployment according to the optimal control approach in Section \ref{sec:optCoverage}.

	\section{Models of Ensemble Mapping and Coverage}
	\label{sec:modsec}
	\subsection{Microscopic Models} \label{sec:micromodel}
	
	The microscopic model is used to simulate the individual robots' motion and probabilistic decisions that are produced by the robot controllers in Section \ref{sec:ControlPolicies}. 
	
	\vspace{2mm}
	
	\begin{remark}
		Here we describe the microscopic model at a formal level as a discrete-time stochastic process. A rigorous correspondence between the microscopic model and the macroscopic model for the case $\Omega = \mathbb{R}^2$ was shown recently by the authors and collaborators in \cite{zhang2017performance}.
	\end{remark}
	
	We model a robot's changes in state and performance of desired actions as a Chemical Reaction Network (CRN) in which the possible species are $M$, a {\it moving} robot; $S$, a {\it stationary} robot; and $A$, an instance of a desired robot activity.  These reactions can only occur when a robot is located in the region of interest $\Gamma$. For example, in an artificial pollination scenario \cite{elamvazhuthi2015optimal}, $\Gamma$ could represent the subset of the domain in which flowers are distributed. A robot's {\it mapping} activity consists of a single irreversible reaction,
	\begin{equation}
	M \hspace{2mm} \xrightarrow{k_{o}} \hspace{2mm} M + A, \label{eq:CRN_map}
	\end{equation}  
	where $A$ is the robot's observation of the region of interest. A robot's {\it coverage} activity consists of two irreversible reactions,
	\begin{eqnarray}
	 M ~~& \xrightarrow{k(t)} & ~~S + A  \label{eq:CRN1_pa}, \\
	 S ~~& \xrightarrow{k_{f}} & ~~M  \label{eq:CRN2_pa},
	\end{eqnarray}  
	where $A$ is a desired action that the robot performs.
	
	In addition, we model the displacement of a robot $i$ over a timestep $\Delta t$ using the standard-form Langevin equation \cite{Gillespie2000},
	\begin{equation}
	\mathbf{X}_i(t+\Delta t) - \mathbf{X}_i(t) = Y_i(t)\big (\mathbf{v}(t) \Delta t + (2 D \Delta t)^{1/2}~\mathbf{Z}_i(t) \big), \label{eq:RWPTpos}
	\end{equation}
	
	where $\mathbf{X}_i(t) \in \mathbb{R}^2$ is the position of  robot $i \in \lbrace 1,..,N\rbrace$ at time $t$ and $\mathbf{Z}_i(t) \in \mathbb{R}^2$ is a vector of independent standard normal random variables that are generated at time $t$.  Here, $Y_i(t)$ is a binary variable associated with the discrete state of robot $i$ at time $t$. When the robot is performing the mapping assignment, $Y_i(t) = 1 $ for all $t$. When it is performing the coverage assignment, $Y_i(t)=1$ if the robot is in state $M$ at time $t$ and $Y_i(t) = 0$ if it is in state $S$ at time $t$. During the coverage assignment, $Y_i(t)$ evolves according to the conditional probabilities $\mathbb{P}(Y_i(t+\Delta t)= 0 ~|~ Y_i(t) = 1) =H_{\Gamma}(\mathbf{X}_i(t))k(t)\Delta t$ and $\mathbb{P}(Y_i(t+\Delta t)= 1 ~|~ Y_i(t) = 0) =k_f \Delta t$, where $H_{\Gamma}$ is the indicator function of the set $\Gamma$ and $\mathbb{P}$ is the probability measure induced by the process $(\mathbf{X},Y)$ on the set of sample paths. The robot avoids collisions with the domain boundary by performing a specular reflection when it encounters this boundary. We emphasize that the velocity $\mathbf{v}(t)$ and diffusion coefficient $D$ are the same for each robot $i$. This assumption enables the coverage problem to be posed as a PDE control problem, as shown in Sections \ref{sec:optMap} and \ref{sec:optCoverage}.

	\subsection{Macroscopic Models}
	\label{sec:macrosec}
	The macroscopic model consists of a set of advection-diffusion-reaction (ADR) PDEs that describe the evolution of probability densities of agents that follow the microscopic model, conditioned on the 
	initial distribution of the agents' states. Since we assume that there are no interactions between the agents, the random variables associated with the agents are independent and identically distributed, and hence we can drop the subscript $i$ from $\mathbf{X}_i(t)$, $Y_i(t)$, and $\mathbf{Z}_i(t)$ and represent the macroscopic model as a single system of PDEs rather than $N$ systems of PDEs. 
	The state variables of this model are the population density fields of {\it moving} robots, $y_1(\mathbf{x},t)$; {\it stationary} robots, denoted by $y_2(\mathbf{x},t)$ in the {\it coverage} model; and instances of the desired robot activity, denoted by $y_2(\mathbf{x},t)$ in the {\it mapping} model and by $y_3(\mathbf{x},t)$ in the {\it coverage} model.  
	We define $Q = \Omega \times (0,T) $ and $\Sigma = \partial\Omega \times (0,T) $ for some fixed final time $T$. The vector $\mathbf{n}(\mathbf{x}) \in \mathbb{R}^2$ is the outward normal to $\partial\Omega$, defined for almost every $\mathbf{x} \in \partial \Omega$.
	We also define an initial density of {\it moving} robots $y_0(\mathbf{x})$ over $\Omega$ that is normalized such that $\int_{\Omega} y_0(\mathbf{x})d\mathbf{x} = 1$.
	
	The macroscopic model of {\it mapping} is given by:
	\begin{eqnarray}
	&& \hspace{-2mm} \frac{\partial y_{1}}{\partial t} = \nabla \cdot (D \nabla y_{1}- \mathbf{v}(t) y_{1}) \hspace{5mm} in \hspace{2mm} Q, \nonumber \\
	&& \hspace{-2mm}  \frac{\partial y_{2}}{\partial t} = k_o H_{\Gamma}(\mathbf{x}) y_{1} \hspace{5mm} in \hspace{2mm} Q \label{eq:MapBCS}
	\end{eqnarray}
	with the no-flux boundary condition
	\begin{equation}
	\mathbf{n} \cdot (D \nabla y_{1} - \mathbf{v}(t)  y_{1})= 0 \hspace{2mm} on \hspace{2mm} \Sigma \label{eq:NFBC} \\
	\end{equation}
	and initial conditions
	\begin{equation}
	y_1(\cdot,0)= y_0, \hspace{2mm} y_2(\cdot,0) = \mathbf{0} \hspace{2mm} on \hspace{2mm} \Omega,
	\label{eq:ICBC}
	\end{equation}
	where $\mathbf{0}$ is the function on $\Omega$ that takes the value 0 almost everywhere. The microscopic model of mapping defined in Section \ref{sec:micromodel} is related to the above macroscopic model by the approximation
	$\mathbb{P}(\mathbf{X}(t) \in \Lambda) ~\approx~ \int_{\Lambda} y_1(\mathbf{x},t) d\mathbf{x}$. Additionally, $\int_{\Lambda} y_2(\mathbf{x},t)d\mathbf{x}$ is the expected number of observations recorded by the robots in the region $\Lambda$ up until time $t$.
	
	The macroscopic model of {\it coverage} is defined as:  
	\begin{eqnarray}
	\hspace{0mm} \frac{\partial y_{1}}{\partial t} & = & \nabla \cdot (D \nabla y_{1}- \mathbf{v}(t) y_{1}) - k(t) H_{\Gamma}(\mathbf{x}) y_{1} + k_f y_{2} \hspace{2mm} in \hspace{2mm} Q, \nonumber \\
	\hspace{0mm} \frac{\partial y_{2}}{\partial t} & = & k(t) H_{\Gamma}(\mathbf{x}) y_{1} - k_f y_{2}  \hspace{2mm} in \hspace{2mm} Q,\nonumber \\
	\hspace{0mm}  \frac{\partial y_{3}}{\partial t} & =& k(t) H_{\Gamma}(\mathbf{x}) y_{1}  \hspace{2mm} in \hspace{2mm} Q, 
	\label{eq:Macro1}
	\end{eqnarray} 
	with the no-flux boundary condition \eqref{eq:NFBC} and initial conditions
	\begin{equation} \label{eq:covIC}
	y_1(\cdot,0)= y_0, \hspace{2mm} y_2(\cdot,0) = y_3(\cdot,0)=\mathbf{0} \hspace{2mm} on \hspace{2mm} \Omega.
	\end{equation}
	
	In this macroscopic model, the density fields $y_1(\mathbf{x},t)$ and $y_2(\mathbf{x},t)$ are related to the microscopic model of coverage, given by the stochastic process $(\mathbf{X}(t),Y(t))$ defined in Section \ref{sec:micromodel}, as follows:  \begin{eqnarray}
	\mathbb{P}\left((\mathbf{X}(t),Y(t)) \in (\Lambda \times \lbrace 1 \rbrace )\right) &~\approx~& \int_{\Lambda} y_1(\mathbf{x},t) d\mathbf{x}, \nonumber \\
	\mathbb{P}\left((\mathbf{X}(t),Y(t)) \in (\Lambda \times \lbrace 0 \rbrace)\right) &~ \approx ~& \int_{\Lambda} y_2(\mathbf{x},t) d\mathbf{x}, \nonumber \\
	\end{eqnarray} 
	where $\Lambda$ is a measurable subset of $\Omega$. Additionally, $\int_{\Lambda} y_3(\mathbf{x},t)d\mathbf{x}$ represents the expected number of instances of robot coverage activity performed in the region $\Lambda$ up until time $t$. 
	
	\section{Mathematical Preliminaries} \label{sec:mathprelim}
	
	We now define some mathematical terms that are used in later sections.  Given two Banach spaces $P$ and $Q$, $\mathcal{L}(P,Q)$ is the space of bounded operators from $P$ to $Q$. For a finite collection of Banach spaces, $\lbrace Z_1,...,Z_M \rbrace$, we equip the Cartesian product of the spaces, $Z_1 \times ... \times Z_M$, with the norm  $\big( \sum_{\alpha=1}^M \| \cdot \|^2_\alpha \big)^{1/2}$. We define $L^2(\Omega)$ as the Hilbert space of square integrable functions over $\Omega$, where $\Omega \subset \mathbb{R}^n$ is a bounded open subset of a Euclidean domain of dimension  $n$. The Hilbertian structure of $L^2(\Omega)$ is induced by the standard inner product, $\langle \cdot , \cdot \rangle_{L^2(\Omega)} :  L^2(\Omega) \times L^2(\Omega) \rightarrow \mathbb{R}$, given by
	$
	\langle f, g\rangle_{L^2(\Omega)} = \int_{\Omega}  f(\mathbf{x})g(\mathbf{x})d\mathbf{x}
	$
	for each $f,g \in L^2(\Omega)$. The norm $\|\cdot\|_{L^2(\Omega)}$ on the space $L^2(\Omega)$ is defined as
	$
	\|f\|_{L^2(\Omega)} = \langle f, f\rangle^{1/2}_{L^2(\Omega)}
	$
	for all $f \in L^2(\Omega)$. For a positive integer $n \in \mathbb{Z}^+$, $P^n$ refers to the Cartesian product of $n$ copies of the space $P$; that is, $P^n = \prod_{i=1}^n P$.  We use $P^*$ to denote the space of linear continuous functionals on the Banach space $P$. The bilinear form that induces the duality between $P$ and $P^*$ is given by $\langle \cdot, \cdot \rangle_{P^*,P}:P^* \times P \rightarrow \mathbb{R}$, where $\langle x , y \rangle_{P^*,P} = x(y)$ for each $x \in P^*$ and each $y \in P$. 
	
	We define the Sobolev space $H^1(\Omega) = \big \lbrace z \in L^2(\Omega): \frac{\partial z}{\partial x_1}, \frac{\partial z}{\partial x_2} \in L^2(\Omega) \big \rbrace$. Here, the spatial derivatives are to be understood as weak derivatives defined in the distributional sense. See  \cite{evans1998partial} for this notion of a derivative of a function that is not necessarily differentiable in the classical sense. We equip the space with the usual Sobolev norm 
	$
	\|y\|_{H^1(\Omega)} = \Big( \|y\|^2_{L^2(\Omega)} + \sum_{i=1}^2 \left\| \frac{\partial y}{\partial x_i} \right\|^2_{L^2(\Omega)}\Big)^{1/2}. \nonumber
	$
	The dual space of $H^1(\Omega)$, denoted by $H^1(\Omega)^*$, is the space of bounded linear functionals on $H^1(\Omega)$ through the inner product of $L^2(\Omega)$. Defining $V=H^1(\Omega)$ and $X = V \times L^2(\Omega)^{2}$, it follows that $X^* := V^{*} \times L^2(\Omega)^{2}$. 
	
	We say that a sequence $f_n$ is weakly converging to $f$ in $X$, written as $f_n\rightharpoonup f$, if $
	\lim_{n \rightarrow \infty} \langle \phi,f_n \rangle_{X^*,X} =  \langle \phi,f \rangle_{X^*,X} \nonumber
	$ for all $\phi$ in $X^*$. For norm convergence of a sequence $f_n$ to $f$, we write $f_n \rightarrow f$. 
	
	The space $L^2(0,T;X)$ consists of all strongly measurable functions $u:[0,T] \rightarrow X$ for which  $ \|u\|_{L^2(0,T;X)} := \big ( \int^T_0 \|u(t)\|_X^2 \big )^{1/2} ~<~ \infty $. The space $C([0,T];X)$ consists of all continuous functions $u:[0,T] \rightarrow X$ for which $ \| u\|_{C([0,T];X)} = \max_{0 \leq t \leq T} \|u(t)\|_X < \infty$. 
	
	We will also use the {\it trace operator}, $\tau : H^1(\Omega) \rightarrow L^2(\partial \Omega)$, defined as the unique linear bounded operator that satisfies $(\tau u)(\mathbf{x}) = u(\mathbf{x})$ for every $\mathbf{x} \in \partial \Omega$ whenever $u$ is in $C^1(\bar{\Omega})$ (that is, $u$ is at least once differentiable in the classical sense). Informally speaking, the trace operator gives meaning to the evaluation of an element of $H^1(\Omega)$ on $\partial \Omega$, which is a set of measure zero.
	
	\section{Optimization Problem for the Mapping Assignment} \label{sec:optMap}
	
	
	This section formulates the {\it mapping assignment} as an optimization problem that identifies the unknown coefficient $H_{\Gamma}$ in the  {\it mapping} macroscopic model \eqref{eq:MapBCS}. 
	This coefficient is reconstructed from the cumulative robot sensor data $\hat{g}(t) $, the total number of observations made over the entire domain by all the robots up until time $t$. We recall that the robots record these observations stochastically according to the chemical reaction \eqref{eq:CRN_map} and that the agents' spatial states evolve according to equation \eqref{eq:RWPTpos}. Here, the velocity input $\mathbf{v}(t)$ is predetermined by the user and should be chosen such that the agents approximately cover the entire domain. An example of such a  choice for $\mathbf{v}(t)$ is one that would drive the robots along a lawnmower path, as shown in Fig. \ref{fig:AM} in Section \ref{sec:simulat}.
	
	In the mapping macroscopic model \eqref{eq:MapBCS}, the cumulative robot sensor data is interpreted as a continuous quantity given by an integral over the domain, $g(t) = \int_{\Omega}y_2({\bf x},t)d{\bf x}$.
	We address the existence and uniqueness of the solutions of the macroscopic model \eqref{eq:MapBCS} in Section \ref{sec:optCoverage}, since this model is a special case of the {\it coverage} macroscopic model \eqref{eq:Macro1} for which $k_f = 0$, $k(t) = k_o$, and the {\it stationary} robot state is absent.
	
	
	We define $\hat{H}$ as an estimate of $H_{\Gamma}$. Due to the one-sided coupling between the state variables $y_1$ and $y_2$ in the PDE model \eqref{eq:MapBCS}, the variable $y_1$ is not affected by the coefficient $H_{\Gamma}$. Hence, assuming that the PDE is well-posed and that a unique solution exists in $H^1(\Omega) \times L^2(\Omega)$, we can pose the mapping problem follows: 
	
	\vspace{2mm}
	\begin{problem} [{Mapping Problem}]  \label{mappingtask_meanfield}
		Given an initial condition $y_0({\bf x})$ of PDE model \eqref{eq:MapBCS} and the corresponding solution $y_1({\bf x}, t)$, find an estimate $\hat{H} \in S_{ad}$, where 
		\begin{equation}
		S_{ad} ~:=~ \big \lbrace u\in L^2(\Omega);  ~0 \leq u(\mathbf{x}) \leq 1 \hspace{2mm} a.e. \hspace{2mm} \mathbf{x} \in\Omega \big \rbrace,\nonumber
		\end{equation} 
		that satisfies the following equation:
		\begin{equation}
		\label{eq:Koperator}
		(K\hat{H})(t) ~:=~ \int_{\Omega} k_o \hat{H}(\mathbf{x}) y_{1}(\mathbf{x},t)d\mathbf{x} ~=~ g(t).
		\end{equation}
	\end{problem} 
	Ideally, the estimate $\hat{H}$ would be constrained to be a function that takes values of $0$ or $1$ over its domain. However, this constraint would make the resulting inverse problem, i.e. solving equation \eqref{eq:Koperator} for $\hat{H}$ in Problem \ref{mappingtask_meanfield}, non-convex when considered as an optimization problem. To make the optimization problem tractable, we relax the range of $\hat{H}$ and consider it as an essentially bounded element of $L^2(\Omega)$.  
	Generally, equations of the form \eqref{eq:Koperator} need not have unique solutions unless some special conditions on $k_o y_1(\mathbf{x},t)$, the kernel of the integral operator $K:L^2 (\Omega) \rightarrow L^2(0,T)$, can be guaranteed. To resolve the ill-posedness of the inverse problem, it can be alternatively posed as an optimization problem with a functional $J(\hat{H})$ that is convex, but not necessarily strictly convex:
	\begin{equation}
	\min_{\hat{H} \in S_{ad}} \hspace{2mm} J(\hat{H})=\|K\hat{H}-g\|^2_{L^2(0,T)}
	\label{eq:Mapprob}
	\end{equation}
	To ensure a unique solution for $\hat{H}$, the optimization problem can be recast with a strictly convex functional:
	\begin{equation}
	\min_{\hat{H} \in S_{ad}} J_{\lambda}(\hat{H})=\frac{1}{2}\|K\hat{H}-g\|^2_{L^2(0,T)}+\frac{\lambda}{2} \|\hat{H}\|^2_{L^2(\Omega)},
	\label{eq:RegMap}
	\end{equation}
	where $\lambda >0$ is the regularization parameter that is often used in the so-called ``Tikohnov regularization'' of inverse  problems \cite{kirsch2011introduction}. Since $S_{ad}$ is convex in $L^2(\Omega)$, the existence and uniqueness of regularized problems of the form \eqref{eq:RegMap} can be guaranteed and has been been well-studied in the theory of inverse problems \cite{kirsch2011introduction}. Hence, we have the following result in \autoref{th:uniquesol}.  The assumption in the theorem that $y_1 \in C([0,T];L^2(\Omega))$ is justified in Section \ref{sec:Engyo}.
	
	\vspace{2mm}
	
	\begin{theorem} \label{th:uniquesol}
		A unique solution to the regularized inverse problem \eqref{eq:RegMap} exists for each $\lambda > 0$, under the assumptions that $y_1 \in C([0,T];L^2(\Omega))$ and $k_o>0$. 
	\end{theorem}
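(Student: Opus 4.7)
The plan is to apply the direct method of the calculus of variations to problem \eqref{eq:RegMap}. Since the admissible set $S_{ad}$ is convex and $J_\lambda$ is strictly convex, the argument reduces to showing (i) that $K:L^2(\Omega)\to L^2(0,T)$ is a bounded linear operator, (ii) that $S_{ad}$ is nonempty, convex, closed and bounded in $L^2(\Omega)$, and (iii) that $J_\lambda$ is strictly convex, continuous, and weakly lower semicontinuous. Once these ingredients are in place, existence follows from weak compactness in the reflexive space $L^2(\Omega)$ and uniqueness follows from strict convexity.

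First I would verify that $K$ is a bounded linear operator. Linearity is immediate from the definition in \eqref{eq:Koperator}. For boundedness, an application of the Cauchy--Schwarz inequality in the spatial variable gives
\begin{equation*}
\bigl|(K\hat{H})(t)\bigr| \;\leq\; k_o\,\|y_{1}(\cdot,t)\|_{L^2(\Omega)}\,\|\hat{H}\|_{L^2(\Omega)},
\end{equation*}
and integrating the square over $t\in(0,T)$ while using the assumption $y_1\in C([0,T];L^2(\Omega))$ yields
\begin{equation*}
\|K\hat{H}\|_{L^2(0,T)} \;\leq\; k_o\,T^{1/2}\,\|y_1\|_{C([0,T];L^2(\Omega))}\,\|\hat{H}\|_{L^2(\Omega)}.
\end{equation*}

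Next I would check the structural properties of $S_{ad}$ and $J_\lambda$. The set $S_{ad}$ is nonempty (it contains the zero function), convex (a convex combination of $[0,1]$-valued functions is $[0,1]$-valued a.e.), bounded (since $\|u\|_{L^2(\Omega)}^2\leq |\Omega|$ for $u\in S_{ad}$), and strongly closed in $L^2(\Omega)$ because any $L^2$-convergent sequence admits an a.e.\ convergent subsequence that preserves the pointwise bounds. Being convex and strongly closed, $S_{ad}$ is also weakly closed. Concerning the cost functional, the fidelity term $\hat{H}\mapsto\tfrac{1}{2}\|K\hat{H}-g\|_{L^2(0,T)}^2$ is convex and, by boundedness of $K$, continuous on $L^2(\Omega)$; the Tikhonov term $\tfrac{\lambda}{2}\|\hat{H}\|_{L^2(\Omega)}^2$ is strictly convex for $\lambda>0$. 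Hence $J_\lambda$ is strictly convex and continuous, and therefore weakly lower semicontinuous.

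With these pieces I would close the argument by the direct method. Take a minimizing sequence $\{\hat{H}_n\}\subset S_{ad}$; by boundedness of $S_{ad}$ and reflexivity of $L^2(\Omega)$, there is a subsequence with $\hat{H}_{n_k}\rightharpoonup \hat{H}^{\star}$; weak closedness gives $\hat{H}^{\star}\in S_{ad}$; and weak lower semicontinuity yields $J_\lambda(\hat{H}^{\star})\leq \liminf_{k\to\infty} J_\lambda(\hat{H}_{n_k}) = \inf_{S_{ad}} J_\lambda$, so $\hat{H}^{\star}$ is a minimizer. Uniqueness is immediate: if $\hat{H}_1,\hat{H}_2\in S_{ad}$ were two distinct minimizers, then $(\hat{H}_1+\hat{H}_2)/2\in S_{ad}$ and strict convexity would give $J_\lambda\bigl((\hat{H}_1+\hat{H}_2)/2\bigr) < \tfrac{1}{2}J_\lambda(\hat{H}_1) + \tfrac{1}{2}J_\lambda(\hat{H}_2) = \inf_{S_{ad}} J_\lambda$, a contradiction. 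None of the steps poses a serious obstacle; the only nontrivial input is the regularity $y_1\in C([0,T];L^2(\Omega))$, which is precisely the hypothesis supplied by the well-posedness analysis of Section \ref{sec:optCoverage}.
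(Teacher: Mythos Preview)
Your proof is correct and follows the same line as the paper: both arguments rest on showing that $K:L^2(\Omega)\to L^2(0,T)$ is bounded (via Cauchy--Schwarz and the hypothesis $y_1\in C([0,T];L^2(\Omega))$) and then invoking the standard strict-convexity/Tikhonov machinery. The paper simply cites \cite{kirsch2011introduction}[Theorem 2.11] at that point, whereas you unpack the direct method explicitly; your version has the small advantage of handling the constraint set $S_{ad}$ directly rather than through a reference that, in its stated form, concerns unconstrained regularization.
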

	\begin{proof}
		For $y_1 \in C([0,T];L^2(\Omega))$ and $k_o>0$, the operator $K$ in equation \eqref{eq:Koperator} is a bounded operator from $L^2(\Omega)$ to $L^2(0,T)$. Thus, the result follows from \cite{kirsch2011introduction}[Theorem 2.11].
	\end{proof}
	
	\vspace{2mm}
	
	We use a gradient descent method to solve the optimization problem \eqref{eq:RegMap}.  To apply this method, we need to characterize the derivative of the objective functional $J_{\lambda}(\hat{H})$. This functional is differentiable in the Fr\'echet sense. Since $K \in \mathcal{L}(L^2(\Omega),L^2(0,T))$, the derivative of $K$ is itself. Then by the chain rule of differentiation, the Fr\'echet derivative of $J_{\lambda}(\hat{H})$, denoted by $J_{\lambda}'(\hat{H})$, is defined as 
	\begin{equation}
	\langle J_{\lambda}'(\hat{H}),s \rangle_{L^2(\Omega)} = \langle K\hat{H}-g,Ks \rangle  _{L^2(0,T)} + \lambda  \langle \hat{H},s \rangle_{L^2(\Omega)}.
	\end{equation}
	Using Riesz representation \cite{evans1998partial}[Appendix D, Theorem 2], we can obtain an explicit representation of the gradient of $J_{\lambda}(\hat{H})$,
	\begin{equation}
	\nabla J_{\lambda}(\hat{H}) ~=~ K^*(K\hat{H}-g)+\lambda \hat{H} ~\in~ L^2(\Omega),
	\label{eq:objmap}
	\end{equation}
	where $K^* \in \mathcal{L}(L^2(0,T),L^2(\Omega))$ is defined as
	\begin{equation}
	(K^* G)(\mathbf{x}) =  \int_0^T k_o G(t)y_1(\mathbf{x},t)dt \hspace{5mm} \forall G \in L^2(0,T).
	\end{equation} 
	To verify that this characterization of $K^*$ is correct, it is straightforward to check that
	\begin{align}
	\langle K\hat{H},G \rangle_{L^2(0,T)} - \langle \hat{H},K^*G \rangle_{L^2(\Omega)} ~=~ 0 \hspace{5mm} \\ \nonumber 
	\forall \hat{H} \in L^2(\Omega), \hspace{2mm} \forall G \in L^2(0,T).
	\end{align}

	\section{Optimal Control Problem for the Coverage Assignment}  \label{sec:optCoverage}
	
	In Section \ref{sec:formoptcontrol}, we formulate the {\it coverage assignment} as an optimal control problem that is used to compute the parameters $\mathbf{v}(t)$ and $k(t)$ in the {\it coverage} macroscopic model \eqref{eq:Macro1}. We analyze the existence and uniqueness of solutions to the PDE model \eqref{eq:Macro1} in Section \ref{sec:Engyo}, the well-posedness of the optimal control in Section \ref{sec:ExOpt}, and the differentiability of the objective functional in Section \ref{sec:diff}.
	%
	
	\subsection{Formulation of the Optimal Control Problem} \label{sec:formoptcontrol}
	
	We first express the PDE model \eqref{eq:Macro1} as a bilinear control system with control inputs $\mathbf{v}(t)$ and $k(t)$.  Toward this end, we supply the following definitions.  Recall from Section \ref{sec:mathprelim} that $V = H^1(\Omega)$,  $X = V \times L^2(\Omega)^{2}$, and $X^* := V^{*} \times L^2(\Omega)^{2}$, and that $\mathcal{L}(X,L^2(\Omega)^3)$ denotes the space of linear bounded operators from $X$ to $L^2(\Omega)^3$. We define the operators $A$, $B_i \in \mathcal{L}(X,L^2(\Omega)^3)$, $i=1,2,3$, as follows:
	\begin{eqnarray}
	&& A =
	\begin{bmatrix}
	D\nabla^2 & k_f & 0\\  
	0 & -k_f & 0\\
	0 & 0 & 0
	\end{bmatrix}
	\hspace{2mm} B_1 = 
	\begin{bmatrix}
	-\frac{\partial}{\partial x_1}   & 0 & 0\\  
	0 & 0 & 0\\
	0 & 0 & 0
	\end{bmatrix}
	\nonumber \\
	&& B_2 = 
	\begin{bmatrix}
	-\frac{\partial}{\partial x_2}   & 0 & 0\\  
	0 & 0 & 0\\
	0 & 0 & 0
	\end{bmatrix}
	\hspace{2mm} B_3 = 
	\begin{bmatrix}
	-H_{\Gamma}   & 0 & 0\\  
	H_{\Gamma} & 0 & 0\\
	H_{\Gamma} & 0 & 0
	\end{bmatrix}.\nonumber 
	\end{eqnarray}
	We also define the spaces $F = L^2(0,T;L^2(\Omega)^{3})$, $G= L^2(0,T;L^2(\partial \Omega))$, $Y = L^2(0,T;X)$, and $Y^* =   L^2(0,T;X^*)$. These definitions will be used to  establish the well-posedeness of the PDE \eqref{eq:Macro1} using the classical theory of {\it weak solutions} of linear PDEs \cite{evans1998partial}. For $\mathbf{f} \in F$, $g \in G$, and $\mathbf{y}_0 \in L^2(\Omega)^{3}$, let  $\mathbf{y} \in Y$ be a function with time derivative $\partial \mathbf{y} / \partial t \in Y^*$.  Denoting the components of the robots' velocity field by $\mathbf{v}(t) = [v_x(t)~v_y(t)]^T$, we define the control inputs $u_1(t) = v_x(t)$, $u_2(t) = v_y(t)$, and $u_3(t) = k(t)$. We can now write the PDE \eqref{eq:Macro1} with boundary condition \eqref{eq:NFBC} and initial condition \eqref{eq:covIC} as a bilinear control system in the following form: 
	\begin{eqnarray}
	&& \hspace{5mm} \frac{\partial \mathbf{y}}{\partial t} = A\mathbf{y} + \sum_{i=1}^{3} u_{i}B_{i}\mathbf{y} + \mathbf{f} \hspace{5mm} in \hspace{2mm} Q, \nonumber \\
	&& \hspace{5mm} \mathbf{n} \cdot (\nabla y_1 - [u_1 ~ u_2]^T y_1)=g  \hspace{5mm} on \hspace{2mm} \Sigma, \nonumber \\
	&&  \hspace{5mm} \mathbf{y}(0)=\mathbf{y}_0 \hspace{5mm} on \hspace{2mm} \Omega, \label{eq:BCS1}
	\end{eqnarray}
	where $Q$ and $\Sigma$ are as defined in Section \ref{sec:macrosec}. 
	We note that the solution of the PDE model in equations \eqref{eq:Macro1}, \eqref{eq:NFBC}, \eqref{eq:covIC} is the solution of the system \eqref{eq:BCS1} with $\mathbf{f}=\mathbf{0}$ and $g=0$.  We consider the more general form \eqref{eq:BCS1} for the purpose of analyzing the differentiability properties of the  control-to-state map (defined in Section \ref{sec:diff}) and the objective functional in the optimal control problem.
	
	A function $\mathbf{y}$ is a weak solution of system \eqref{eq:BCS1} if 
	\begin{align}
	\label{eq:WeakBCS}
	\langle \frac{\partial \mathbf{y}}{\partial t},\boldsymbol{\phi} \rangle_{Y^*,Y} = ~& \int_0^T\langle A_g(t)\mathbf{y}(t),\boldsymbol{\phi}(t) \rangle_{X^*,X}dt \\ \nonumber
	& + \sum_{i=1}^{3}  \langle u_{i}B_{i}\mathbf{y},\boldsymbol{\phi} \rangle_{F} + \langle \mathbf{f},\boldsymbol{\phi} \rangle_{F}
	\end{align} 
	for all $\boldsymbol{\phi} \in L^2(0,T;X)$. Here, $A_g(t) : X \rightarrow X^*$ is the variational form of the operator $A$ for each $t \in (0,T)$. 
	The boundary condition \eqref{eq:NFBC} is equipped with $A_g$ in the variational formulation using Green's theorem as, 
	\begin{equation}
	\label{eq:weaklap}
	A_g(t) =
	\begin{bmatrix}
	M_g(t) & k_f & 0\\  
	0 & -k_f & 0\\
	0 & 0 & 0
	\end{bmatrix},
	\end{equation} 
	where $M_g(t):V \rightarrow V^*$ for each $t \in (0,T)$ is the Laplacian in the variational formulation and is defined as
	\begin{equation}
	\begin{split}
	\left\langle M_g(t) y,\phi \right\rangle_{V^*,V} ~=~ & - \left\langle D \nabla y,\nabla \phi \right\rangle_{L^2(\Omega)} \\ 
	& \hspace{-5mm} + \int_{\partial \Omega} (g(t)+\mathbf{n} \cdot [u_1(t) ~ u_2(t)]^Ty) \phi d\mathbf{x}
	\label{eq:Mdef}
	\end{split}
	\end{equation} 
	for all $y \in V$ and $\phi \in V^*$.
	
	The coverage problem can now be framed as follows.
	\vspace{2mm}
	\begin{problem} [\textbf{Coverage Problem}]  \label{coveragetask_meanfield}
		Define a  target spatial distribution of robot coverage activity, $\mathbf{y}_{\Omega} \in L^2(\Omega)^3$, to be achieved by time $T$, and a set of admissible control inputs,
		\begin{equation}
		\begin{split}
		U_{ad} = \lbrace \mathbf{u} \in L^2(0,T)^{3}; & ~~u^{min}_i \leq u_i(t) \leq u^{max}_i, \hspace{2mm} \\
		& ~~ i=1,2,3, ~~a.e. \hspace{2mm} t \in (0,T) \rbrace. \nonumber
		\end{split}
		\end{equation}
		Let $\bar{Y} = C([0,T],L^2(\Omega)^{3})$, and define a function $W \in \mathcal{L}(L^2(\Omega)^{3})$ that weights the relative significance of minimizing the distances between different state variables and their target distributions.  Then, given an initial condition $y_0({\bf x})$, solve the optimal control problem
		\begin{equation}
		\begin{split}
		\min\limits_{(\mathbf{y},\mathbf{u})\in \bar{Y} \times U_{ad}} J(\mathbf{y},\mathbf{u}) ~=~ & \frac{1}{2} \|W\mathbf{y}(\cdot,T)-\mathbf{y}_{\Omega}\|_{L^2(\Omega)^{3}}^2 \\
		& + \frac{\lambda}{2}\|\mathbf{u}\|_{L^2(0,T)^{m}}^2  \label{eq:OptControlProblem}
		\end{split}
		\end{equation}
		subject to equation \eqref{eq:BCS1} with $\mathbf{f}=\mathbf{0}$ and $g=0$.   
	\end{problem} 
	
	Note that, due to the essential bounds on $\mathbf{u}$, we have that $\mathbf{u} \in L^{\infty}(0,T)^{3}$. Here, we only consider essentially bounded control inputs since we only prove existence of solutions for control variables in $L^{\infty}(0,T)^{3}$. The existence and uniqueness of solutions for control variables in $L^p(0,T)^{3}$, $p < \infty$, is a nontrivial issue and is beyond the scope of this work.

	\subsection{Energy Estimates} \label{sec:Engyo}
	
	Energy estimates refer to bounds on the solutions of a PDE system with respect to certain parameters of interest, such as the initial condition, coefficients, or boundary parameters. Whereas in the theory of weak solutions of PDEs \cite{evans1998partial}, these energy estimates are used to show the existence of solutions, in optimal control analysis they are used to study the differentiability properties of the control-to-state map. In this section, we derive energy estimates for the solutions of the PDE model \eqref{eq:WeakBCS} and use them to establish existence and uniqueness of solutions to the model, 
	a result (Lemma \ref{ExistLemma}) that we previously stated without proof in our work \cite{elamvazhuthi2015optimal}. More importantly, the estimates are then used to prove existence of solutions to the optimal control problem. The differentiability of the control-to-state map also follows from these energy estimates. 
	
	
	\vspace{2mm}
	
	\begin{lemma} \label{lemma:energyest}
		Let $\mathbf{b} \in \mathbb{R}^2$ be such that  $\max  \limits_{i=1,2} |b_i| \leq \max \limits_{i=1,2} |u^{max}_i|+\min \limits_{i=1,2} |u^{max}_i|$. Suppose that $\tilde{g} \in L^2(\partial \Omega)$. Define $M:V \rightarrow V^*$ as 
		\begin{equation}
		\left\langle My,\phi \right\rangle_{V^*,V} =  \left\langle D \nabla y,\nabla \phi \right\rangle_{L^2(\Omega)} - \int_{\partial \Omega} (\tilde{g}+\mathbf{n} \cdot \mathbf{b}y) \phi d\mathbf{x}.
		\label{eq:adntM}
		\end{equation} 
		Then we have the following energy estimate for all $y \in V$:
		\begin{equation}
		\tilde{\beta} \|y\|^2_V ~\leq~ \left\langle My,y \right\rangle_{V^*,V} + \tilde{\alpha} (\|\tilde{g}\|^2_{L^2(\partial \Omega)} + \|y\|^2_{L^2(\Omega)}),
		\end{equation} 
		which holds for some $\tilde{\alpha}, \tilde{\beta}>0 $ that depend only on $\Omega$, $\max \limits_{i=1,2} |u^{min}_i|$, and $\max \limits_{i=1,2} |u^{max}_i|$.
	\end{lemma}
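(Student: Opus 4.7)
The natural route is to test $M$ against $y$ itself and then absorb the boundary contributions into the Dirichlet energy. Setting $\phi = y$ in \eqref{eq:adntM} yields
\begin{equation*}
\langle My, y\rangle_{V^*,V} ~=~ D\|\nabla y\|_{L^2(\Omega)}^2 - \int_{\partial\Omega} \tilde{g}\, y\, d\mathbf{x} - \int_{\partial\Omega} (\mathbf{n}\cdot\mathbf{b})\, y^2\, d\mathbf{x},
\end{equation*}
so the task reduces to controlling the two boundary integrals and then adding an $\|y\|_{L^2(\Omega)}^2$ contribution to promote the $H^1$-seminorm to the full $V$-norm.

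For the linear boundary term I would apply Cauchy--Schwarz followed by Young's inequality to get $\bigl|\int_{\partial\Omega} \tilde{g}\, y\bigr| \leq \tfrac{1}{2\eta}\|\tilde{g}\|_{L^2(\partial\Omega)}^2 + \tfrac{\eta}{2}\|y\|_{L^2(\partial\Omega)}^2$ for any $\eta > 0$. For the quadratic term, $|\mathbf{n}| = 1$ almost everywhere on $\partial\Omega$ and the hypothesis on $\mathbf{b}$ gives $\bigl|\int_{\partial\Omega}(\mathbf{n}\cdot\mathbf{b})\, y^2\bigr| \leq C_{\mathbf{b}}\|y\|_{L^2(\partial\Omega)}^2$ with $C_{\mathbf{b}}$ depending only on $\max_i |u_i^{min}|$ and $\max_i |u_i^{max}|$. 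So in both cases the problem collapses to estimating $\|y\|_{L^2(\partial\Omega)}^2$.

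The decisive tool is the interpolation trace inequality for Lipschitz domains: for every $\epsilon > 0$ there exists $C_\epsilon = C_\epsilon(\Omega) > 0$ such that $\|y\|_{L^2(\partial\Omega)}^2 \leq \epsilon\|\nabla y\|_{L^2(\Omega)}^2 + C_\epsilon \|y\|_{L^2(\Omega)}^2$ for every $y \in V$. This follows either from the Gagliardo trace inequality $\|y\|_{L^2(\partial\Omega)}^2 \leq C_\Omega(\|\nabla y\|_{L^2(\Omega)} \|y\|_{L^2(\Omega)} + \|y\|_{L^2(\Omega)}^2)$ combined with one further application of Young's inequality, or from an Ehrling-lemma argument exploiting the compact embedding $V \hookrightarrow L^2(\Omega)$ together with continuity of the trace operator $\tau : V \to L^2(\partial\Omega)$; both routes rely essentially on the Lipschitz regularity of $\partial\Omega$ that is assumed in Section \ref{sec:ControlPolicies}. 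I would then choose $\epsilon$ small enough that the aggregate coefficient of $\|\nabla y\|_{L^2(\Omega)}^2$ produced by both boundary estimates is at most $D/2$, absorbing those terms into the $D\|\nabla y\|_{L^2(\Omega)}^2$ coming from $\langle My, y\rangle$. Finally, adding a controlled multiple of $\|y\|_{L^2(\Omega)}^2$ to both sides upgrades the remaining half-Dirichlet estimate to $\tilde{\beta}\|y\|_V^2$, with $\tilde{\alpha}, \tilde{\beta} > 0$ depending only on $\Omega$, $D$, and the stated bounds on the admissible controls.

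The main obstacle is the sign-indeterminate term $\int_{\partial\Omega}(\mathbf{n}\cdot\mathbf{b}) y^2\, d\mathbf{x}$: unlike a Robin term with a nonnegative coefficient, it cannot simply be dropped, and the crude trace inequality $\|y\|_{L^2(\partial\Omega)}^2 \leq C\|y\|_V^2$ is too weak, since the resulting $C\|\nabla y\|_{L^2(\Omega)}^2$ summand could dominate the available $D\|\nabla y\|_{L^2(\Omega)}^2$ coercivity. The interpolation version of the trace inequality, with its arbitrarily small prefactor on $\|\nabla y\|_{L^2(\Omega)}^2$, is precisely what permits the absorption step to close.
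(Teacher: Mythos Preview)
Your argument is correct and mirrors the paper's proof: both test $M$ against $y$, split the boundary contribution into a linear part (handled by Cauchy--Schwarz/Young) and a quadratic part, and then absorb the resulting $\|\tau y\|_{L^2(\partial\Omega)}^2$ terms via an $\epsilon$-version of the trace inequality on the Lipschitz domain (the paper invokes \cite{grisvard2011elliptic}[Theorem~1.5.1.10] for exactly this purpose). The only cosmetic difference is that the paper adds $D\|y\|_{L^2(\Omega)}^2$ at the outset to work with $\|y\|_V^2$ throughout, whereas you keep the $H^1$-seminorm until the last step; the absorption and the dependence of $\tilde\alpha,\tilde\beta$ on $\Omega$ and the control bounds are identical.
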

	
	\begin{proof}
		Setting $\phi = y$ in definition \eqref{eq:Mdef}, we obtain the inequalities 
		\begin{equation}
		D \int_{\Omega} | \nabla y|^2 d\mathbf{x} ~ \leq ~ \left\langle My,y \right\rangle_{V^*,V} + \int_{\partial \Omega} (\tilde{g} + \mathbf{n} \cdot \mathbf{b} y)y d\mathbf{x}  \nonumber
		\end{equation}
		\begin{equation}
		\begin{split}
		\hspace{10mm}   \leq \left\langle My,y \right\rangle_{V^*,V} &+ \int_{\partial \Omega} |\tilde{g}||y| d\mathbf{x} 
		+ \| \mathbf{b} \| \int_{\partial \Omega} |y|^2 d\mathbf{x}, \nonumber
		\end{split}
		\end{equation}
		from which we can conclude that
		\begin{equation}
		\begin{split}
		D \|y\|^2_{V}  \leq \left\langle My,y \right\rangle_{V^*,V} &+ \frac{1}{2} \|\tilde{g}\|^2_{L^2(\partial \Omega)} + \frac{1}{2} \|\tau  y\|^2_{L^2(\partial \Omega)} \\
		&+ \|\mathbf{b}\| \|\tau y\|^2_{L^2(\partial \Omega)} + D \|y\|^2_{L^2(\Omega)}, 
		\label{eq:lemeq1}
		\end{split}
		\end{equation}
		where $\tau$ is the trace operator defined in Section \ref{sec:mathprelim}. Then it follows from a modified form of the {\it Trace Theorem} for domains with Lipschitz boundaries \cite{grisvard2011elliptic}[Theorem 1.5.1.10] that there exists a constant $K$, independent of $\epsilon ~\in (0,1)$ and $y \in V$, such that the following inequality holds for every $\epsilon \in (0,1)$:
		\begin{align*}
		\|\tau y\|^2_{L^2(\partial \Omega)} \leq K  \epsilon^{1/2}\|y\|^2_{V}+K (\epsilon^{-1/2}-\epsilon^{+1/2})\|y\|^2_{L^2(\Omega)}.
		\end{align*}
		Thus, for each $\epsilon \in (0,1)$, the inequality \eqref{eq:lemeq1} becomes 
		\begin{equation}
		\begin{split}
		D \|y\|^2_{V} \leq \left\langle My,y \right\rangle_{V^*,V} &+ \frac{1}{2} \|\tilde{g}\|^2_{L^2(\partial \Omega)} + \tilde{b}_{\epsilon}   \|y\|^2_{V} \\
		& + C_{\epsilon} \|y\|^2_{L^2(\Omega)} , 
		\end{split}
		\end{equation}
		where $\tilde{b}_{\epsilon} = \frac{K}{2}  \epsilon^{1/2}+rK  \epsilon^{1/2}$, $C_{\epsilon} = D+\frac{K}{2}(\epsilon^{-1/2}-\epsilon^{+1/2})+rK(\epsilon^{-1/2}-\epsilon^{+1/2})$, and $r = \max \limits_{i=1,2} |u^{max}_i|+\min \limits_{i=1,2} |u^{max}_i|$. Note that we can choose an $\epsilon >0$ arbitrarily small, independent of $y \in V$, such that $D > \tilde{b}_{\epsilon}$. We fix such an $\epsilon>0$. Then we set $\tilde{\beta}=D - \tilde{b}_{\epsilon}$ and $\tilde{\alpha} = \max \lbrace \frac{1}{2},C_{\epsilon} \rbrace$. This concludes the proof.
	\end{proof}
	
	
	\vspace{2mm}
	
	\begin{corollary} \label{Corren}
		Define $A_g(t): X \rightarrow X^*$ as in equation \eqref{eq:weaklap}. Then, for almost every $t \in (0,T)$, we have the following energy estimate:
		\begin{eqnarray}
		\beta \|\mathbf{y}\|^2_X  ~~\leq ~~& -& \left\langle A_g(t)\mathbf{y},\mathbf{y} \right\rangle_{X^*,X} \nonumber \\
		~~& + & ~\alpha (\|g(t)\|^2_{L^2(\partial \Omega)} + \|\mathbf{y}\|^2_{L^2(\Omega)^{3}}),
		\label{eq:energy}
		\end{eqnarray}
		which holds for some $\alpha, \beta>0 $ that depend only on $\Omega$, $\max \limits_{i=1,2} |u^{min}_i|$, and $\max \limits_{i=1,2} |u^{max}_i|$.
	\end{corollary}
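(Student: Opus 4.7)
The plan is to reduce everything to Lemma~\ref{lemma:energyest} applied to the first component and to absorb the coupling terms with elementary inequalities. Expanding $-\langle A_g(t)\mathbf{y},\mathbf{y}\rangle_{X^*,X}$ using the block structure in \eqref{eq:weaklap}, I get
\begin{equation*}
-\langle A_g(t)\mathbf{y},\mathbf{y}\rangle_{X^*,X} = -\langle M_g(t) y_1,y_1\rangle_{V^*,V} - k_f\langle y_2,y_1\rangle_{L^2(\Omega)} + k_f\|y_2\|^2_{L^2(\Omega)},
\end{equation*}
since the third row/column of $A_g(t)$ is zero and the $(2,1)$ block vanishes.

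Comparing the definitions \eqref{eq:Mdef} and \eqref{eq:adntM}, the operator $-M_g(t)$ is precisely the operator $M$ of Lemma~\ref{lemma:energyest} with $\tilde g = g(t)$ and $\mathbf{b} = [u_1(t)~u_2(t)]^T$, which satisfies the required bound on $\max_i |b_i|$ since $u_i(t) \in [u_i^{\min}, u_i^{\max}]$ almost everywhere. Thus for a.e.\ $t \in (0,T)$ the lemma yields $\tilde{\beta}\|y_1\|^2_V \leq -\langle M_g(t)y_1,y_1\rangle_{V^*,V} + \tilde{\alpha}(\|g(t)\|^2_{L^2(\partial\Omega)} + \|y_1\|^2_{L^2(\Omega)})$. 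For the cross term I would apply Young's inequality in the form $|k_f\langle y_2,y_1\rangle_{L^2(\Omega)}| \leq \tfrac{k_f}{2}\|y_1\|^2_{L^2(\Omega)} + \tfrac{k_f}{2}\|y_2\|^2_{L^2(\Omega)}$, so that the $y_2$ self-term retains a positive coefficient $k_f/2$ while only producing an additional $L^2(\Omega)$ penalty on $y_1$, which can be absorbed into the right-hand side of the desired inequality.

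Combining these, I would choose $\beta = \min\{\tilde{\beta},\, k_f/2\}$ to get a uniform lower bound on $\beta(\|y_1\|^2_V + \|y_2\|^2_{L^2(\Omega)})$, and then add $\beta\|y_3\|^2_{L^2(\Omega)}$ to both sides (it does not appear in $\langle A_g(t)\mathbf{y},\mathbf{y}\rangle_{X^*,X}$, so it moves freely onto the right) to complete the left-hand side to $\beta\|\mathbf{y}\|^2_X$. Collecting the coefficients of $\|g(t)\|^2_{L^2(\partial\Omega)}$ and of $\|\mathbf{y}\|^2_{L^2(\Omega)^3}$ on the right gives a constant $\alpha$ expressible in terms of $\tilde{\alpha}$, $k_f$, and $\beta$, all of which depend only on $\Omega$, $k_f$, and the bounds on $u_i$. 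Since each step is either a direct invocation of the lemma or an elementary inequality, I do not expect a genuine obstacle; the only thing to be careful about is ensuring that the $-\frac{k_f}{2}\|y_1\|^2_{L^2(\Omega)}$ penalty generated by Young's inequality does not spoil the coefficient of $\|y_1\|^2_V$, which is guaranteed because that penalty is in the weaker $L^2$ norm and therefore lands in the $\|\mathbf{y}\|^2_{L^2(\Omega)^3}$ term on the right-hand side.
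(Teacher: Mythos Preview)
Your proposal is correct and follows essentially the same route as the paper: expand $-\langle A_g(t)\mathbf{y},\mathbf{y}\rangle_{X^*,X}$ via the block structure, apply Lemma~\ref{lemma:energyest} to the first component, and control the $k_f$ cross term with Young's inequality. The only cosmetic difference is that the paper sets $\beta=\tilde\beta$ and absorbs the $\|y_2\|^2_{L^2}$ and $\|y_3\|^2_{L^2}$ contributions entirely into the right-hand $\alpha\|\mathbf{y}\|^2_{L^2(\Omega)^3}$ term (taking $\alpha=\max\{\tilde\alpha,\tfrac{3}{2}k_f\}$), whereas you retain the $\tfrac{k_f}{2}\|y_2\|^2$ piece on the left and take $\beta=\min\{\tilde\beta,k_f/2\}$; the paper's choice keeps $\beta$ independent of $k_f$ as the statement asserts, but both arguments are valid.
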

	
	\begin{proof}
		
		Comparing equations \eqref{eq:Mdef} and \eqref{eq:adntM}, we observe that $M_g(t) = -M$, $\mathbf{b}=[u_1(t)~u_2(t)]^T$, and $\tilde{g} = g(t)$ for almost every $t \in (0,T)$. Then, from the definition \eqref{eq:weaklap} of the operator $A_g(t)$, we have that
		\begin{align*}
		-\left\langle A_g(t)\mathbf{y},\mathbf{y} \right\rangle_{X^*,X} = \left\langle My_1,y_1 \right\rangle_{V^*,V}&- k_f \left\langle y_1,y_2 \right\rangle_{L^2(\Omega)} \\ \nonumber 
		&+k_f \|y_2\|^2_{L^2(\Omega)}.
		\end{align*}
		Using Cauchy's inequality and Young's inequality \cite{evans1998partial},
		\begin{align*}
		\left\langle M\mathbf{y},\mathbf{y} \right\rangle_{X^*,X} ~\leq~ & - \left\langle A_g(t)y_1,y_1 \right\rangle_{V^*,V}+ \frac{k_f}{2}\|y_1\|^2_2 \\ \nonumber  &+ \frac{k_f}{2}\|y_2\|^2_2   
		+k_f \|y_2\|^2_{L^2(\Omega)}.
		\end{align*}
		We set $\beta = \tilde{\beta}$ and $\alpha=\max \lbrace \tilde{\alpha},\frac{3}{2}k_f \rbrace$, where $ \tilde{\alpha}$ and $\tilde{\beta}$ are defined as in the proof of Lemma \ref{lemma:energyest}. This concludes the proof.
	\end{proof}
	
	\vspace{2mm}
	
	
	\begin{lemma}
		\label{ExistLemma}
		~~~Given $\mathbf{f} \in L^2(0,T;L^2(\Omega)^{3})$, $ g \in$ $L^2(0,T;L^2(\partial \Omega))$, and an initial condition $\mathbf{y}_0 \in L^2(\Omega)^{3}$, there exists a unique solution $\mathbf{y} \in C([0,T];L^2(\Omega)^{3})$ to system \eqref{eq:BCS1}. We have the following energy estimate for this solution:  
		\begin{align}
		& \hspace{-3mm} \|\mathbf{y}\|_{C([0,T];L^2(\Omega)^{3})} +\|\mathbf{y}\|_{L^2(0,T;X)} +\|\partial \mathbf{y} / \partial t \|_{L^2(0,T;X^*)} \nonumber \\ 
		\leq & \hspace{1mm}  K  (\|\mathbf{y}_0\|_{L^2(\Omega)^{3}}+ \|\mathbf{f}\|_{L^2(0,T;L^2(\Omega)^3)} + \|g\|_{L^2(0,T;L^2(\partial \Omega)}),
		\label{eq:BCSenergy}
		\end{align} 
		where $K>0$ depends only on $\Omega$, $\max \limits_{1 \leq i \leq 3} |u^{max}_i|$, and $\max \limits_{1 \leq i \leq 3} |u^{min}_i|$. 
	\end{lemma}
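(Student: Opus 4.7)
The plan is to prove existence by the Galerkin method, with Corollary \ref{Corren} supplying the key coercivity estimate, and then derive uniqueness and the energy bound by a standard energy argument. Note that the equations for $y_2$ and $y_3$ contain no spatial derivatives, so the parabolic character of the system is carried entirely by $y_1$, although all three components are coupled through the reaction and advection terms. I would start by selecting a basis $\{\mathbf{w}_k\}$ of $X$ orthonormal in $L^2(\Omega)^3$ (for instance, combining eigenfunctions of the Neumann Laplacian on $V$ with an $L^2$-orthonormal basis on the remaining two components) and forming the approximations $\mathbf{y}_n(t) = \sum_{k=1}^n d_k^n(t)\mathbf{w}_k$ defined by Galerkin projection of \eqref{eq:WeakBCS}, with $\mathbf{y}_n(0)$ equal to the $L^2$-projection of $\mathbf{y}_0$. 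Existence of $\mathbf{y}_n$ on $[0,T]$ follows from standard ODE theory applied to the coefficient vector $(d_k^n)$.

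The crucial step is the a priori estimate obtained by testing against $\mathbf{y}_n$. Corollary \ref{Corren} controls $\langle A_g(t)\mathbf{y}_n,\mathbf{y}_n\rangle$ from below by $\beta \|\mathbf{y}_n\|_X^2 - \alpha(\|g(t)\|_{L^2(\partial\Omega)}^2 + \|\mathbf{y}_n\|_{L^2}^2)$. The $B_3$-contribution is bounded by $C\|\mathbf{y}_n\|_{L^2}^2$ since $H_\Gamma$ is bounded, while for $i=1,2$ integration by parts gives $\langle B_i\mathbf{y}_n,\mathbf{y}_n\rangle = -\tfrac{1}{2}\int_{\partial\Omega} n_i y_{n,1}^2\, dS$, which by the same trace-inequality argument used in the proof of Lemma \ref{lemma:energyest} is bounded by $\varepsilon\|\mathbf{y}_n\|_X^2 + C_\varepsilon\|\mathbf{y}_n\|_{L^2}^2$ for arbitrarily small $\varepsilon > 0$. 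Choosing $\varepsilon$ small absorbs this into the coercive $\beta\|\mathbf{y}_n\|_X^2$ term, and Young's inequality handles $\langle\mathbf{f},\mathbf{y}_n\rangle$ and the $g$-boundary term. Gronwall's inequality then produces uniform bounds in $L^\infty(0,T;L^2(\Omega)^3)\cap L^2(0,T;X)$ of the form required by \eqref{eq:BCSenergy}, with constants depending only on $\Omega$ and the $L^\infty$ bounds on $\mathbf{u}$.

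With the uniform estimate in hand, a subsequence converges weakly in $L^2(0,T;X)$ and weakly-$*$ in $L^\infty(0,T;L^2(\Omega)^3)$ to a limit $\mathbf{y}$. The bound on $\partial_t\mathbf{y}$ in $L^2(0,T;X^*)$ is read off the equation, since $A_g\mathbf{y}$, each $u_i B_i\mathbf{y}$, and $\mathbf{f}$ are already controlled in $L^2(0,T;X^*)$ by the derived bounds. Aubin--Lions compactness yields strong convergence in $L^2(0,T;L^2(\Omega)^3)$, which suffices to pass to the limit in every term of the weak formulation \eqref{eq:WeakBCS}. Continuity $\mathbf{y}\in C([0,T];L^2(\Omega)^3)$ then follows from the standard embedding of $\{u\in L^2(0,T;X):\partial_t u\in L^2(0,T;X^*)\}$ into $C([0,T];L^2(\Omega)^3)$ (see \cite{evans1998partial}). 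Uniqueness is immediate from linearity: the difference $\mathbf{w}$ of two solutions satisfies the homogeneous version of \eqref{eq:BCS1} with $\mathbf{w}(0)=0$, and the same energy estimate applied to $\mathbf{w}$ forces $\mathbf{w}\equiv 0$.

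The main obstacle is controlling the bilinear advection terms $u_i B_i\mathbf{y}$ in the energy estimate, because they contain spatial derivatives of $y_1$ and, after integration by parts, produce a boundary contribution that cannot be absorbed by an $L^2$-bound alone. Dominating it requires invoking the trace inequality with a small absorption parameter so that the resulting $\|\mathbf{y}\|_X^2$ piece stays within the coercive margin provided by Corollary \ref{Corren}; this is exactly the mechanism worked out in Lemma \ref{lemma:energyest}. Once this absorption is set up, the boundedness of $B_3$, the treatment of the source $\mathbf{f}$ and the boundary datum $g$, and the Galerkin passage to the limit are all routine.
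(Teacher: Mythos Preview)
Your argument is correct and follows the standard energy--Galerkin route, but it differs from the paper's proof in two noteworthy respects.

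First, for existence and uniqueness the paper does not carry out a Galerkin construction at all: after deriving the a priori estimate it casts \eqref{eq:WeakBCS} in the abstract form $d\mathbf{y}/dt = \tilde A(t)\mathbf{y} + \tilde{\mathbf{f}}(t)$ with $\tilde A(t)\in\mathcal L(X,X^*)$ and invokes an off-the-shelf well-posedness theorem (Wloka, Theorem~26.1). Your self-contained Galerkin argument is longer but avoids the external citation; both are valid. Second, to estimate the bilinear terms $\langle u_i B_i\mathbf{y},\mathbf{y}\rangle$ for $i=1,2$, the paper simply uses that each $B_i$ is bounded from $X$ to $L^2(\Omega)^3$, so $|\langle B_i\mathbf{y},\mathbf{y}\rangle|\le c_i\|\mathbf{y}\|_X\|\mathbf{y}\|_{L^2(\Omega)^3}$, and then absorbs with Young's inequality. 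Your integration-by-parts plus trace-inequality treatment works too and arguably isolates the boundary contribution more transparently, but the paper's route is shorter since the $B_i$-boundedness is immediate from the definition of $X$.

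One small correction: Aubin--Lions will not give strong convergence of the full vector $\mathbf{y}_n$ in $L^2(0,T;L^2(\Omega)^3)$, because the embedding $X\hookrightarrow L^2(\Omega)^3$ is compact only in the first component (the $y_2,y_3$ slots carry the identity $L^2\hookrightarrow L^2$). This is harmless here, since for fixed $\mathbf{u}$ the system is linear in $\mathbf{y}$ and weak convergence in $L^2(0,T;X)$ already suffices to pass to the limit in every term of \eqref{eq:WeakBCS}; you can simply drop the Aubin--Lions step.
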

	\begin{proof}
		We first determine a bound on the sum $ \|\mathbf{y}\|_{C([0,T];L^2(\Omega)^{3})} +\|\mathbf{y}\|_{L^2(0,T;X)}$ in the energy estimate \eqref{eq:BCSenergy}. Let $\mathbf{y}$ be a weak solution of system \eqref{eq:WeakBCS}, and set  $\mathbf{\phi} = \mathbf{y}$ in \eqref{eq:WeakBCS}.
		Then 
		\begin{align*}
		\left\langle \frac{\partial \mathbf{y}}{\partial t}(t),\mathbf{y}(t) \right\rangle_{X,X^*} - \left\langle A_g(t) \mathbf{y}(t),\mathbf{y}(t) \right\rangle _{X,X^*} \\ \nonumber
		~=~ \sum_{i=1}^3 \left\langle u_{i}(t)B_{i}\mathbf{y}(t),\mathbf{y}(t) \right\rangle _{L^2(\Omega)^{3}}  \\ \nonumber 
		+ ~\left\langle \mathbf{f}(t),\mathbf{y}(t) \right\rangle  _{L^2(\Omega)^{3}}.
		\end{align*}
		Combining this equality with the inequality \eqref{eq:energy}, and observing that $\left\langle \frac{\partial \mathbf{y}}{\partial t}(t),\mathbf{y}(t) \right\rangle_{X,X^*} = \frac{1}{2} \frac{d}{dt} \|\mathbf{y}(t)\|^2_{L^2(\Omega)^{3}}$, we obtain the inequality
		\begin{align*}
		& \frac{d}{dt} \|\mathbf{y}(t)\|^2_{L^2(\Omega)^{3}} + 2 \beta  \|\mathbf{y}(t)\|^2_X \nonumber \\
		& \leq~ 2 \sum_{i=1}^3 \|u_{i}(t)\|_{L^{ \infty }(0,T)} \left\langle B_{i}\mathbf{y}(t),\mathbf{y}(t) \right\rangle _{L^2(\Omega)^{3}} \\ \nonumber
		& + 2 \left\langle \mathbf{f}(t),\mathbf{y}(t) \right\rangle  _{L^2(\Omega)^{3}} + 2 \alpha (\|g(t)\|^2_{L^2(\partial \Omega)} + \|y_1(t)\|^2_{L^2(\Omega)}). 
		\end{align*}
		Using Cauchy's inequality and Young's inequality \cite{evans1998partial}, we derive the following inequality:
		\begin{align}
		\frac{d}{dt} & \|\mathbf{y}(t)\|^2_{L^2(\Omega)^{3}} + 2 \beta \|\mathbf{y}(t)\|^2_X \nonumber \\
		& ~~\leq ~ 2\sum_{i=1}^p \|u_{i}\|_{L^{ \infty }(0,T)} (\| B_{i}\mathbf{y}(t)\|_{L^2(\Omega)^{3}} \|\mathbf{y}(t) \|_{L^2(\Omega)^{3}}) \nonumber \\ 
		& ~~~~~ + ~(\|\mathbf{f}(t)\|^2_{L^2(\Omega)^{3}} + \|\mathbf{y}(t)\|^2_{L^2(\Omega)^{3}})\nonumber \\ 
		&~~~~~  + ~2 \alpha  (\|g(t)\|^2_{L^2(\partial \Omega)} + \|\mathbf{y}(t)\|^2_{L^2(\Omega)^3}).  \label{eq:enest_ineq1}
		\end{align} 
		Let $E = r \times (\max \limits_{1 \leq i \leq 3} |u^{max}_i| +  \max \limits_{1 \leq i \leq 3} |u^{min}_i|$), where $r$ is the maximum of the operator norms of the operators $B_i$, $i = 1,2,3$. Then we have a bound on the first term on the right side of inequality \eqref{eq:enest_ineq1} as follows:
		\begin{align}
		\hspace{-5mm} 2\sum_{i=1}^3 \|u_{i}\|_{L^{ \infty }(0,T)} (\| B_{i}\mathbf{y}(t)\|_{L^2(\Omega)^{3}} \|\mathbf{y}(t) \|_{L^2(\Omega)^{3}}) \nonumber \\
		~ \leq ~ 2 E (\|\mathbf{y}(t)\|_{X} \|\mathbf{y}(t) \|_{L^2(\Omega)^{3}}). \label{eq:enest_ineq2}
		\end{align}
		Applying Young's inequality, we can bound the term on the right side of inequality \eqref{eq:enest_ineq2}:
		\begin{equation}
		2 E (\|\mathbf{y}(t)\|_{X} \|\mathbf{y}(t) \|_{L^2(\Omega)^{3}}) 
		\leq \beta \|\mathbf{y}(t)\|^2_{X} + \frac{E^2}{\beta}\|\mathbf{y}(t) \|^2_{L^2(\Omega)^{3}}.
		\end{equation}
		Combining these upper bounds with the inequality \eqref{eq:enest_ineq1}, we find that:
		\begin{align*}
		\frac{d}{dt} & \|\mathbf{y}(t)\|^2_{L^2(\Omega)^{3}} + \beta \|\mathbf{y}(t)\|^2_X \nonumber \\
		& \leq ~\frac{E^2}{\beta}\|\mathbf{y}(t) \|^2_{L^2(\Omega)^{3}} + (\|\mathbf{f}\|^2_{L^2(\Omega)^{3}} + \|\mathbf{y}(t)\|^2_{L^2(\Omega)^{3}}) \nonumber \\
		& ~~~~~ + 2\alpha (\|g(t)\|^2_{L^2(\partial \Omega)} + \|\mathbf{y}(t)\|^2_{L^2(\Omega)^{3}}),
		\end{align*}
		and therefore
		\begin{align}
		\frac{d}{dt} & \|\mathbf{y}(t)\|^2_{L^2(\Omega)^{3}} + \beta   \|\mathbf{y}(t)\|^2_X \nonumber \\
		& ~~~ \leq ~ C(\|\mathbf{y}(t) \|^2_{L^2(\Omega)^{3}} + \|\mathbf{f}(t)\|^2_{L^2(\Omega)^{3}} + \|g(t)\|^2_{L^2(\partial \Omega)}) \label{eq:en11}
		\end{align} 
		for $C =  \frac{E^2}{\beta} + 1 + 2\alpha$.  This inequality implies that
		\begin{align*}
		\frac{d}{dt} \|\mathbf{y}(t)\|^2_{L^2(\Omega)^{3}} \leq  C(\|\mathbf{y}(t) \|^2_{L^2(\Omega)^{3}} + \|\mathbf{f}(t)\|^2_{L^2(\Omega)^{3}} +\|g(t)\|^2_{L^2(\partial \Omega)}).
		\end{align*}
		Setting $\eta(t) = \|\mathbf{y}(t)\|^2_{L^2(\Omega)^{3}}$ and $\psi(t) = C(\|\mathbf{f}(t)\|^2_{L^2(\Omega)^{3}} + \|g(t)\|^2_{L^2(\partial \Omega)})$, and applying Gronwall's lemma \cite{evans1998partial} with these functions, we get:
		\begin{align*}
		\max_{0 \leq t \leq T} \|\mathbf{y}(t)\|^2_{L^2(\Omega)^{3}} \leq & ~~e^{CT} (\|\mathbf{y}_0 \|^2_{L^2(\Omega)^{3}}  \nonumber \\
		& + \|\mathbf{f}\|^2_{L^2(0,T;L^2(\Omega)^{3})} \nonumber \\ 
		& +\|g\|^2_{L^2(0,T);L^2(\partial \Omega)} ). 
		\end{align*}
		Then integrating both sides of inequality \eqref{eq:en11} over the time interval $(0,T)$, and using the inequality above, we obtain a bound on the sum of the first two terms in the energy estimate \eqref{eq:BCSenergy}: 
		\begin{align}
		\label{eq:prevest} 
		\max_{0 \leq t \leq T} \|\mathbf{y}(t)\|^2_{L^2(\Omega)^{3}} +  \|\mathbf{y}\|^2_{L^2(0,T;X)} & \\ \nonumber 
		\leq C'(\|\mathbf{y}_0 \|^2_{L^2(\Omega)^{3}} & + \|\mathbf{f}\|^2_{L^2(0,T;L^2(\Omega)^{3})}  \\ \nonumber
		& +\|g\|^2_{L^2(0,T);L^2(\partial \Omega)}),
		\end{align}
		where $C' = {\rm max}\lbrace e^{CT},\frac{C}{\beta}\rbrace$.
		
		Next, we derive a bound on the third term in estimate \eqref{eq:BCSenergy},
		$\|\partial \mathbf{y} / \partial t \|_{L^2(0,T;X^*)}$. We know that $B_i$ is a bounded operator from $X$ to $L^2(\Omega)^3$ for each $i \in \lbrace 1, 2, 3 \rbrace$. Therefore, $\|B_i \mathbf{q}\|_{L^2(\Omega)^2} \leq c_i \|\mathbf{q}\|_X$ for all $\mathbf{q} \in X$ and for some positive constants $c_i$. Similarly, $A_g \in \mathcal{L}(X,X^*)$ when $g=0$. It follows that $\|A_g \mathbf{q}\|^2_X \leq c_A (\|\mathbf{q}\|^2_X+\|g\|^2_{L^2(\partial \Omega)})$ for all $\mathbf{q} \in X$ and for a fixed $g \in L^2(\partial \Omega)$ and some positive constant $c_A$. Let $\mathbf{v} \in X$ such that $\|\mathbf{v}\|_X \leq 1$. Then, using inequality \eqref{eq:prevest} and the constants $c_A$ and $c_i$, $i=1,2,3$, we find that any weak solution $\mathbf{y}$ of system \eqref{eq:BCS1} satisfies the following inequality,
		\begin{align*}
		\bigg| \langle \frac{\partial \mathbf{y}}{\partial t}(t),\mathbf{v} \rangle_{X^*,X} \bigg |   \leq ~C'' (\|\mathbf{y}(t)\|_{X}  + \|\mathbf{f}(t)\|_{L^2(\Omega)^3}  + \|g(t)\|_{L^2(\partial \Omega)}),
		\end{align*}
		for almost every $t \in (0,T)$ and for the constant $C''$   = $r \times C' \times  {\rm max} \lbrace c_A, c_1, c_2, c_3 \rbrace $, where $r = \max \limits_{1 \leq i \leq 3} |u^{max}_i|+\max \limits_{1 \leq i \leq 3} |u^{min}_i|+\max \limits_{i=1,2}|b_i|+1$. Here, we are bounding the left-hand side of equation \eqref{eq:WeakBCS} by setting $\boldsymbol{\phi}(t) = \boldsymbol{v}$ for almost every $t \in (0,T)$. The inequality above implies that
		\begin{align*}
		\int_0^T \| \frac{\partial \mathbf{y}}{\partial t}(t)\|^2_{X^*} dt 
		\leq ~ C''  \int_0^T  ( \|\mathbf{y}(t)\|^2_{X} + \|\mathbf{f}(t)\|^2_{L^2(\Omega)^3}  + \|g(t)\|^2_{L^2(\partial \Omega)})dt
		.
		\end{align*} 
		The bound on the term $\|\partial \mathbf{y} / \partial t \|_{L^2(0,T;X^*)}$ follows from the above inequality. Hence, the bounds in the energy estimate \eqref{eq:BCSenergy} hold for $K = {\rm max}\lbrace 2C', 2C''\rbrace$.
		
		To confirm the existence and uniqueness of the solution to the PDE \eqref{eq:Macro1}, we note that the weak form of the PDE \eqref{eq:WeakBCS} is in the abstract form
		\begin{equation}
		\frac{d \mathbf{y}}{dt} = \tilde{A}(t)\mathbf{y}+\mathbf{\tilde{f}}(t), ~~~~\mathbf{y}(0)=\mathbf{y}_0,
		\end{equation}
		where $\tilde{A}(t) = A_0(t)+\sum_{i=1}^{3}  u_{i}(t)B_{i} \in \mathcal{L}(X,X^*)$ and $\mathbf{\tilde{f}} \in L^2(0,T;X^*)$ is given by $ \langle \mathbf{\tilde{f}}(t), \boldsymbol{\psi} \rangle_{X,X^*} = \langle \mathbf{f}(t), \boldsymbol{\psi} \rangle_{L^2(\Omega)^3} + \int_{\partial \Omega} g(\mathbf{x},t) \psi_1(\mathbf{x})d\mathbf{x}$ for almost every $t \in [0,T]$. Here, the term $\int_{\partial \Omega} g(\mathbf{x},t) \psi_1(\mathbf{x})d\mathbf{x}$ defines a bounded functional on $X$ due to the bounds on the trace map $\tau$ \cite{grisvard2011elliptic}[Theorem 1.5.1.10]. Then the result follows from \cite{wloka1987partielle}[Theorem 26.1] and the estimate \eqref{eq:energy}.
	\end{proof}
	
	\subsection{Existence of Optimal Control} \label{sec:ExOpt}
	
	In this section, we prove the existence of a solution to the optimal control problem \eqref{eq:OptControlProblem}. Toward this end, we define the control-to-state mapping, $\Xi$: $U_{ad} \rightarrow \bar{Y}$, which maps a control, $\mathbf{u}$, to $\mathbf{y}$, the corresponding solution to system \eqref{eq:BCS1} for $\mathbf{f} = \mathbf{0}$ and $g=0$. We introduce the following reduced optimization problem,
	\begin{equation}
	\min\limits_{\mathbf{u} \in U_{ad}} \hat{J}(\mathbf{u}) := J(\Xi(\mathbf{u}),\mathbf{u}), \label{eq:redproblem}
	\end{equation}
	where $J$ is defined in problem \eqref{eq:OptControlProblem}.  This problem incorporates the PDE system \eqref{eq:BCS1} into the objective functional, rather than defining it as a set of constraints as in the original problem formulation \eqref{eq:OptControlProblem}. We shall henceforth analyze the reduced problem \eqref{eq:redproblem}.
	
	\vspace{2mm}
	
	\begin{theorem}
		An optimal control $\mathbf{u}^*$ exists that minimizes the objective functional $\hat{J}$ in problem \eqref{eq:redproblem}.
	\end{theorem}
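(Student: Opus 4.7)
The plan is to prove existence by the direct method of the calculus of variations, adapted to PDE-constrained optimization as in Tröltzsch. I would extract a weakly convergent minimizing sequence of controls, show that the associated state sequence converges (in appropriate topologies) to a weak solution corresponding to the limit control, and then exploit weak lower semicontinuity of the cost.

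First, since $\hat J \geq 0$, pick a minimizing sequence $\{\mathbf{u}_n\} \subset U_{ad}$ with $\hat J(\mathbf{u}_n) \to \inf_{U_{ad}} \hat J$. Because $U_{ad}$ is bounded, convex, and closed in $L^2(0,T)^3$, Banach--Alaoglu together with Mazur's theorem lets me pass to a subsequence (not relabeled) with $\mathbf{u}_n \rightharpoonup \mathbf{u}^* \in U_{ad}$ weakly in $L^2(0,T)^3$ (and weak-$*$ in $L^\infty(0,T)^3$ since the sequence is uniformly essentially bounded). Set $\mathbf{y}_n := \Xi(\mathbf{u}_n)$. By Lemma~\ref{ExistLemma}, applied with $\mathbf{f} = \mathbf{0}$ and $g = 0$, the states $\mathbf{y}_n$ are uniformly bounded in $C([0,T]; L^2(\Omega)^3) \cap L^2(0,T;X)$, and $\partial \mathbf{y}_n / \partial t$ is uniformly bounded in $L^2(0,T;X^*)$, with the bounding constant depending only on $\mathbf{y}_0$ and the $L^\infty$ bounds defining $U_{ad}$. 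A further diagonal extraction yields $\mathbf{y}_n \rightharpoonup \mathbf{y}^*$ in $L^2(0,T;X)$ and $\partial_t \mathbf{y}_n \rightharpoonup \partial_t \mathbf{y}^*$ in $L^2(0,T;X^*)$. The Aubin--Lions compactness lemma, using the compact embedding $H^1(\Omega) \hookrightarrow L^2(\Omega)$ in the first component, upgrades this to strong convergence $y_{1,n} \to y_1^*$ in $L^2(0,T; L^2(\Omega))$. Because the trace map $\tau : H^1(\Omega) \to L^2(\partial \Omega)$ is compact, a parallel argument yields $\tau y_{1,n} \to \tau y_1^*$ strongly in $L^2(0,T; L^2(\partial \Omega))$.

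The main obstacle is passing to the limit in the bilinear terms $u_{i,n} B_i \mathbf{y}_n$ appearing in the weak formulation \eqref{eq:WeakBCS}, since both factors converge only weakly and, in general, a product of weakly convergent sequences does not converge weakly. My strategy is to first test \eqref{eq:WeakBCS} against a smooth $\boldsymbol{\phi}$ from a dense subset of $L^2(0,T; X)$ and to integrate by parts in the advective terms ($i=1,2$) so as to shift the spatial derivative from $y_{1,n}$ onto $\phi_1$. This produces two kinds of terms: an interior term $\int_0^T u_{i,n}(t) \int_\Omega y_{1,n} \, \partial_{x_i} \phi_1 \, d\mathbf{x}\, dt$ and a boundary term involving $\tau y_{1,n}$. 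In each case the inner scalar integral is a function of $t$ that converges \emph{strongly} in $L^1(0,T)$ (by strong convergence of $y_{1,n}$ in $L^2(0,T; L^2(\Omega))$, respectively of $\tau y_{1,n}$ in $L^2(0,T; L^2(\partial \Omega))$), so the standard strong--weak duality lets me pair it with the weak-$*$ limit $u_i^*$. The reaction term ($i=3$) is zeroth-order and already controlled by the strong convergence of $y_{1,n}$ alone. The linear terms $\langle A_g \mathbf{y}_n, \boldsymbol{\phi}\rangle$ and the time-derivative term pass to the limit immediately by weak convergence, and density in $\boldsymbol{\phi}$ concludes that $\mathbf{y}^*$ is a weak solution for $\mathbf{u}^*$. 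Uniqueness in Lemma~\ref{ExistLemma} then gives $\mathbf{y}^* = \Xi(\mathbf{u}^*)$.

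To close the argument, I would show weak lower semicontinuity of $\hat J$ along the chosen subsequence. The regularization $\|\mathbf{u}\|^2_{L^2(0,T)^3}$ is convex and norm-continuous, hence weakly l.s.c., so $\|\mathbf{u}^*\|^2 \le \liminf_n \|\mathbf{u}_n\|^2$. For the terminal observation term, the uniform bound on $\partial_t \mathbf{y}_n$ in $L^2(0,T;X^*)$ combined with the uniform bound on $\mathbf{y}_n$ in $C([0,T];L^2(\Omega)^3)$ gives (via, e.g., extracting from $\mathbf{y}_n(\cdot,T)$ a weak limit and identifying it with $\mathbf{y}^*(\cdot,T)$ through the weak form of the equation evaluated near $t=T$) that $\mathbf{y}_n(\cdot, T) \rightharpoonup \mathbf{y}^*(\cdot, T)$ in $L^2(\Omega)^3$. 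Since $W$ is linear continuous on $L^2(\Omega)^3$ and the norm is weakly l.s.c., $\|W\mathbf{y}^*(\cdot,T) - \mathbf{y}_\Omega\|^2 \le \liminf_n \|W \mathbf{y}_n(\cdot,T) - \mathbf{y}_\Omega\|^2$. Together these inequalities give $\hat J(\mathbf{u}^*) \le \liminf_n \hat J(\mathbf{u}_n) = \inf_{U_{ad}} \hat J$, so $\mathbf{u}^*$ is a minimizer. The trickiest step is clearly the passage to the limit in the bilinear advective terms, and the integration-by-parts together with the trace compactness is the technical device I expect to spend the most care on.
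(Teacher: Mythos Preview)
Your proposal is correct and follows essentially the same route as the paper: direct method with a minimizing sequence, weak compactness of $U_{ad}$, the energy estimate of Lemma~\ref{ExistLemma} for uniform bounds, Aubin--Lions to upgrade $y_{1,n}$ to strong $L^2(Q)$ convergence, strong--weak pairing to pass to the limit in the bilinear terms (with integration by parts/Green's theorem to handle the advective contribution and its boundary trace), and weak lower semicontinuity of $J$ to conclude. The only cosmetic difference is that you invoke compactness of the trace map to get strong convergence of $\tau y_{1,n}$ directly, whereas the paper uses Green's theorem to rewrite the boundary contribution as an interior term in which strong $L^2(Q)$ convergence of $y_{1,n}$ already suffices; both devices accomplish the same thing.
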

	\begin{proof}
		The functional $\hat{J}(\mathbf{u})$ is bounded from below, which implies that the infimum is a finite real number.
		Therefore, $q = \inf_{\mathbf{u} \in U_{ad}} \hat{J}(\mathbf{u})$ exists. We now determine  an optimal pair $(\mathbf{y}^*,\mathbf{u}^*)$, for which $ J(\mathbf{y}^*,\mathbf{u}^*) = q$.  Let $\lbrace \mathbf{u}^n \rbrace^{\infty}_{n=1}$ be a minimizing sequence such that $\hat{J}(\mathbf{u}^n) \rightarrow q$ as $n \rightarrow \infty$.  $U_{ad}$ is a bounded and closed convex set, and thus is weak sequentially compact. 
		Hence, there exists a subsequence $\lbrace \mathbf{u}^n \rbrace^{\infty}_{n=1}$ such that
		\begin{equation}
		\mathbf{u}^n \rightharpoonup \mathbf{u}^* \hspace{5mm} in \hspace{2mm} L^2(0,T)^{3},
		\end{equation}
		with $\mathbf{u}^* \in U_{ad}$.  Recall from Section \ref{sec:mathprelim} that $X = V \times L^2(\Omega)^{2}$. Then, the uniform boundedness of the solution, $\mathbf{y}$, in the energy estimate \eqref{eq:BCSenergy} allows us to extract a subsequence $\lbrace \mathbf{y}^n \rbrace^{\infty}_{n=1}$, where $\mathbf{y}^n = \Xi(\mathbf{u}^n)$, such that
		\begin{equation}
		\mathbf{y}^n \rightharpoonup \mathbf{y}^* \hspace{5mm} in \hspace{2mm} L^2(0,T;X).
		\end{equation}
		
		It is necessary to confirm that $\Xi(\mathbf{u}^*) = \mathbf{y}^*$, since we do not know whether $\Xi$ is weakly continuous. Using the energy estimate \eqref{eq:BCSenergy}, we have uniform bounds on $\|\mathbf{y}^n\|_{L^2(0,T;X)} +\|\partial \mathbf{y}^n / \partial t \|_{L^2(0,T;X^*)}$. Therefore, by the Aubin-Lions lemma \cite{simon1986compact},  we have that there exists a subsequence such that
		\begin{equation}
		y_1^n \rightarrow y^*_1 \hspace{5mm} in \hspace{2mm} L^2(0,T;L^2(\Omega)).
		\end{equation}
		The energy estimate \eqref{eq:BCSenergy} implies that the terms $\nabla y^n_1$, $\frac{\partial \mathbf{y}^n}{\partial t}$, and $k_f y^n_2$ are uniformly bounded. Hence, we can also conclude that there exist subsequences that satisfy:
		\begin{align*}
		\nabla y^n_1 \rightharpoonup & ~\nabla  y^*_1  \hspace{5mm}in \hspace{2mm} L^2(0,T;L^2(\Omega)), \nonumber \\
		\nabla y^n_1 \rightarrow & ~\nabla y^*_1 \hspace{5mm} in \hspace{2mm} L^2(0,T;V^*), \nonumber \\
		\frac{\partial \mathbf{y}^n}{\partial t}  \rightharpoonup & ~\frac{\partial \mathbf{y}^*}{\partial t} \hspace{5mm}in \hspace{2mm} L^2(0,T;X^*), \nonumber \\
		k_f y^n_2 \rightharpoonup & ~k_f y^*_2  \hspace{5mm}in \hspace{2mm} L^2(0,T;L^2(\Omega)). 
		\end{align*}
		The first two components of $\mathbf{u}^n$ are denoted by $\mathbf{v}^n$ and the third component by $k^n$. 
		From the strong convergence of $y_1^n$ in $L^2(0,T;L^2(\Omega))$ and the weak convergence of $\mathbf{u}^n$ in $L^2(0,T)^{3}$, we can further deduce that: 
		\begin{align*}
		k^n H_\Gamma y^n_{1} \rightharpoonup & ~k H_\Gamma y^*_{1} \hspace{5mm} in \hspace{2mm} L^2(0,T;L^2(\Omega)), \nonumber \\
		\mathbf{v}^n \nabla y^n_1 \rightharpoonup & ~\mathbf{v} \nabla y^*_1 \hspace{5mm} in \hspace{2mm} L^2(0,T;V^*).
		\end{align*}
		To prove convergence of the terms in the weak form of the PDE \eqref{eq:WeakBCS} that arise from the boundary condition \eqref{eq:NFBC}, we apply Green's theorem 
		\begin{equation}
		\begin{split}
		\langle \mathbf{v}^n \cdot \nabla y^n_1, \phi \rangle_{L^2(0,T;L^2(\Omega))} + \int_0^T \int_{\partial \Omega}\mathbf{n} \cdot (\mathbf{v}^n y_1 \phi) d\mathbf{x} dt \\
		= ~ -\langle \mathbf{v}^n \cdot y^n_1 , \nabla \phi \rangle_{L^2(0,T;L^2(\Omega))}
		\end{split}
		\end{equation}
		for all $\mathbf{\phi} \in L^2(0,T;V)$. Due to the strong convergence of $y^n_1$ in $L^2(0,T;L^2(\Omega))$ and the weak convergence of $\mathbf{v}^n$ in $L^2(0,T)^2$, we have that
		\begin{equation}
		\mathbf{v}^n \cdot y^n_1  \rightharpoonup  \mathbf{v}^* \cdot y_1^* \hspace{5mm} in \hspace{2mm} L^2(0,T;L^2(\Omega)). 
		\end{equation}
		The convergence of all the sequences defined thus far implies that the sequence of solutions $\mathbf{y}^n = \Xi(\mathbf{u}^n)$, which satisfies
		\begin{equation}
		\langle \frac{\partial \mathbf{y}^n}{\partial t},\phi \rangle_{Y^*,Y} = \langle A_g \mathbf{y}^n,\phi \rangle_{Y^*,Y} + \sum_{i=1}^{3}  \langle \mathbf{u}^n_{i}B_{i}\mathbf{y}^n,\phi \rangle_{F} \label{eq:existBCS}
		\end{equation}
		with $g=0$, converges to the solution $\mathbf{y}^* = \Xi(\mathbf{u}^*)$, which satisfies
		\begin{equation}
		\langle \frac{\partial \mathbf{y}^*}{\partial t},\mathbf{\phi} \rangle_{Y^*,Y} = \langle A_g\mathbf{y}^*,\mathbf{\phi} \rangle_{Y^*,Y} + \sum_{i=1}^{3}  \langle \mathbf{u}^*_{i}B_{i}\mathbf{y}^*,\mathbf{\phi} \rangle_{F} \label{eq:existBCSc}
		\end{equation}
		with $g=0$.
		
		It remains to be shown that $\hat{J}(\mathbf{u}^*)=q$. Because $J$ is weakly lower semicontinuous, we can state that
		\begin{equation}
		q = \lim_{n \rightarrow \infty}J(\mathbf{y}^n,\mathbf{u}^n) ~\leq~ J(\mathbf{y}^*,\mathbf{u}^*).
		\end{equation}
		Since $q$ was defined earlier as the infimum of $\hat{J}(\mathbf{u})$, we conclude that
		\begin{equation}
		\hat{J}(\mathbf{u}^*)= J(\mathbf{y}^*,\mathbf{u}^*) = q,
		\end{equation}
		which completes the proof.
	\end{proof}
	
	\subsection{Differentiability of the Control-to-State Map}\label{sec:diff}
	In this section, we summarize several results from our prior work \cite{elamvazhuthi2015optimal} that we apply to derive the gradient of $\hat{J}(\mathbf{u})$. We use the gradient in the method of projected gradient descent to numerically compute a locally optimal control. The numerical computation of  this gradient is made possible by deriving an expression for the gradient in terms of the {\it adjoint equation}, which is a system of PDEs. This method for computing the control variables is explained in \cite{elamvazhuthi2014variational}[Section 5.2]. See also \cite{troltzsch2010optimal}[Section 3.7.1], where the method is discussed in the more general context of optimal control of parabolic PDEs.
	
	
	\vspace{2mm}
	
	\begin{proposition}
		The mapping $\Xi$ is directionally differentiable at every $\mathbf{u} \in U_{ad}$ along each direction $\mathbf{h} \in L^{\infty}(0,T)^{3}$. Its directional derivative $\Xi'(\mathbf{u}) : U_{ad} \rightarrow \bar{Y}$ in the direction $\mathbf{h} \in L^{\infty}(0,T)^3$, denoted by $\Xi'(\mathbf{u})\mathbf{h}$, is given by the solution $\mathbf{w}$ to the following equation,
		\begin{eqnarray}
		&& \frac{\partial \mathbf{w}}{\partial t} = A\mathbf{w} + \sum_{i=1}^{3} u_{i}B_{i}\mathbf{w}  + \sum_{i=1}^{3} h_{i}B_{i}\mathbf{y}  \hspace{2mm}  in \hspace{2mm} Q,\nonumber \\
		&& \mathbf{n} \cdot(\nabla w_1 - [u_1 ~ u_2]^T \cdot w_1) = \mathbf{n} \cdot ( [h_1 ~ h_2]^T y_1) \hspace{2mm} on \hspace{2mm} \Sigma,\nonumber \\
		&& \mathbf{w}(0) = 0  \hspace{2mm} on \hspace{2mm} \Omega.
		\end{eqnarray}
	\end{proposition}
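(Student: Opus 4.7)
The plan is to realize the directional derivative as the $\bar{Y}$-limit of the difference quotients $(\Xi(\mathbf{u}+t\mathbf{h})-\Xi(\mathbf{u}))/t$ as $t\downarrow 0$, and to identify that limit with the stated solution $\mathbf{w}$. All the necessary machinery is already in hand from Lemma \ref{ExistLemma} and the energy estimate \eqref{eq:BCSenergy}; the real work is just unpacking the bilinear cross terms correctly.

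First I would verify that $\mathbf{w}$ is well-defined. Setting $\mathbf{f} := \sum_{i=1}^3 h_i B_i \mathbf{y}$ and $\tilde g := \mathbf{n}\cdot([h_1~h_2]^T y_1)$, the boundedness $B_i \in \mathcal{L}(X,L^2(\Omega)^3)$ together with $\mathbf{y}\in L^2(0,T;X)$ gives $\mathbf{f}\in L^2(0,T;L^2(\Omega)^3)$, while the trace bound used in Lemma \ref{lemma:energyest} yields $\tilde g \in L^2(0,T;L^2(\partial\Omega))$. Since the linearized system for $\mathbf{w}$ is precisely \eqref{eq:BCS1} with these choices of $\mathbf{f}$ and $g$ and zero initial datum, Lemma \ref{ExistLemma} produces a unique $\mathbf{w}\in \bar{Y}$ satisfying an estimate of the form \eqref{eq:BCSenergy}.

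Next, for $t$ small enough that $\mathbf{u}+t\mathbf{h}$ remains essentially bounded (the only thing Lemma \ref{ExistLemma} genuinely requires), let $\mathbf{y}^t := \Xi(\mathbf{u}+t\mathbf{h})$, $\mathbf{y}:=\Xi(\mathbf{u})$, and $\mathbf{d}_t := (\mathbf{y}^t-\mathbf{y})/t$. Subtracting the two copies of \eqref{eq:BCS1} and dividing by $t$ yields
\begin{equation*}
\frac{\partial \mathbf{d}_t}{\partial t} = A\mathbf{d}_t + \sum_{i=1}^3 u_i B_i \mathbf{d}_t + \sum_{i=1}^3 h_i B_i \mathbf{y}^t,
\end{equation*}
with the linearized boundary datum $\mathbf{n}\cdot([h_1~h_2]^T y_1^t)$ and zero initial condition; note that the bilinear products rearrange so that the forcing is proportional to $\mathbf{y}^t$, not to $\mathbf{d}_t$. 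Setting $\mathbf{z}_t := \mathbf{d}_t - \mathbf{w}$ and subtracting the equation for $\mathbf{w}$ then gives a linear system of the form \eqref{eq:BCS1} with zero initial condition, interior forcing $\sum_i h_i B_i(\mathbf{y}^t-\mathbf{y})$, and boundary forcing $\mathbf{n}\cdot([h_1~h_2]^T(y_1^t-y_1))$.

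The conclusion comes from applying \eqref{eq:BCSenergy} twice. Once to $\mathbf{y}^t-\mathbf{y}$, which as a solution of \eqref{eq:BCS1} is driven by the perturbation $t\sum_i h_i B_i\mathbf{y}^t$ (together with the analogous boundary term); this yields $\|\mathbf{y}^t-\mathbf{y}\|_{L^2(0,T;X)} = O(t)$. Applied a second time to $\mathbf{z}_t$, the estimate bounds $\|\mathbf{z}_t\|_{\bar Y}$ by a constant times $\|\mathbf{y}^t-\mathbf{y}\|_{L^2(0,T;X)}$, using boundedness of the $B_i$ and of the trace $\tau$, so $\mathbf{z}_t \to 0$ in $\bar Y$ and thus $\Xi'(\mathbf{u})\mathbf{h}=\mathbf{w}$. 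The main obstacle is the bookkeeping in the previous paragraph: one must arrange the bilinear cross terms so that the residual equation for $\mathbf{z}_t$ is driven by $\mathbf{y}^t-\mathbf{y}$ rather than by $\mathbf{d}_t$, otherwise the energy estimate only yields a Gronwall-type inequality that does not vanish with $t$. Once the forcing is correctly localized, the argument reduces entirely to two applications of Lemma \ref{ExistLemma}.
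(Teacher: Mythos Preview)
Your argument is correct and is the standard route to directional differentiability for bilinear control systems: identify the candidate $\mathbf{w}$ via Lemma~\ref{ExistLemma}, write the difference quotient $\mathbf{d}_t$ as a solution of the same linear structure forced by $\sum_i h_i B_i\mathbf{y}^t$, and then estimate the remainder $\mathbf{z}_t=\mathbf{d}_t-\mathbf{w}$ via \eqref{eq:BCSenergy}. The key algebraic point you flag---arranging the bilinear cross terms so that the forcing for $\mathbf{z}_t$ is $\sum_i h_i B_i(\mathbf{y}^t-\mathbf{y})$ and the corresponding boundary datum $\mathbf{n}\cdot([h_1~h_2]^T(y_1^t-y_1))$---is exactly right, and the two applications of \eqref{eq:BCSenergy} (first to bound $\|\mathbf{y}^t-\mathbf{y}\|_{L^2(0,T;X)}=O(t)$ via a uniform-in-$t$ constant $K$, then to push $\mathbf{z}_t\to 0$ in $\bar Y$) close the argument.

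The paper itself does \emph{not} supply a proof of this proposition: Section~\ref{sec:diff} explicitly states that it is summarizing results from the authors' prior work \cite{elamvazhuthi2015optimal}, and the proposition is recorded without argument. So there is no in-paper proof to compare against; your proposal effectively fills in what the paper outsources. One minor point worth making explicit in a full write-up is that the uniform constant $K$ in \eqref{eq:BCSenergy} is available because it depends only on the $L^\infty$ bounds of the control, and $\|\mathbf{u}+t\mathbf{h}\|_{L^\infty}$ is bounded uniformly for $t\in[0,1]$; you allude to this but it is the hinge on which the $O(t)$ bound for $\mathbf{y}^t-\mathbf{y}$ rests.
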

	
	\vspace{2mm}
	
	\begin{theorem}
		The reduced objective functional $\hat{J}$ is directionally differentiable at every $\mathbf{u} \in U_{ad}$ along all $\mathbf{h} \in L^{\infty}(0,T)^{3}$. Its directional derivative along $\mathbf{h} \in  L^{\infty}(0,T)^3$ has the form 
		\begin{equation}
		\begin{split}
		d\hat{J}(\mathbf{u})\mathbf{h}  = &\int^T_0 \langle  \mathbf{n} \cdot ( [h_1 ~ h_2]^T p_1),y_1 \rangle_{L^2(\partial \Omega)} dt
		\\  + & \int^T_0 \langle \sum_{i=1}^{3} h_i B_i \mathbf{y},\mathbf{p} \rangle_{L^2(\Omega)^{3}}dt + \lambda \langle \mathbf{u},\mathbf{h} \rangle_{L^2(0,T)^{3}},
		\label{eq:objpan}
		\end{split}
		\end{equation} 
		where $\mathbf{p}$ is the solution of the backward-in-time adjoint equation,
		\begin{eqnarray}
		&& \hspace{-2mm} -\frac{\partial p_{1}}{\partial t} = \nabla \cdot (D \nabla p_{1}+ \mathbf{v}(t) p_{1}) + k(t) H_{\Gamma} (-p_{1}+p_{2}+p_{3})  \hspace{2mm}  \nonumber \\
		&& \hspace{1.2cm} in \hspace{2mm} Q, \nonumber \\
		&&  \hspace{-2mm} -\frac{\partial p_{2}}{\partial t} =  k_{f}p_1-k_{f}p_2 \hspace{2mm} ~~in \hspace{2mm} Q,\nonumber \\
		&& \hspace{-2mm}  -\frac{\partial p_{3}}{\partial t} = 0 \hspace{2mm} ~~in \hspace{2mm} Q, 
		\label{eq:AdjointMacro1}
		\end{eqnarray}
		with the Neumann boundary conditions
		\begin{equation}
		\mathbf{n} \cdot \nabla p_{1}= 0 \hspace{2mm} ~~ on \hspace{2mm} \Sigma
		\label{eq:adjbc}
		\end{equation} 
		and final time condition
		\begin{equation}
		\mathbf{p}(T) = W^*(W\mathbf{y}(\cdot,T)-\mathbf{y}_{\Omega}),
		\label{eq:adjic}
		\end{equation} 
		where $W$ is defined in Problem \ref{coveragetask_meanfield}.
	\end{theorem}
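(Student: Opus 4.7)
The plan is to apply the chain rule using the directional differentiability of $\Xi$ from the preceding proposition, and then to use the adjoint system \eqref{eq:AdjointMacro1}--\eqref{eq:adjic} to eliminate the sensitivity $\mathbf{w} := \Xi'(\mathbf{u})\mathbf{h}$ in favor of the adjoint state $\mathbf{p}$, producing a formula that is explicit in $\mathbf{h}$. By the preceding proposition together with the evident differentiability of $\frac{\lambda}{2}\|\mathbf{u}\|^2$, one obtains
$$d\hat{J}(\mathbf{u})\mathbf{h} = \langle W^*(W\mathbf{y}(\cdot,T) - \mathbf{y}_\Omega),\, \mathbf{w}(\cdot,T)\rangle_{L^2(\Omega)^3} + \lambda \langle \mathbf{u}, \mathbf{h}\rangle_{L^2(0,T)^3},$$
and the terminal condition \eqref{eq:adjic} rewrites the first inner product as $\langle \mathbf{p}(T),\mathbf{w}(T)\rangle_{L^2(\Omega)^3}$.

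Before manipulating this expression I would establish well-posedness of the adjoint: reversing time via $\tau := T - t$ recasts \eqref{eq:AdjointMacro1}--\eqref{eq:adjic} as a forward-in-time linear system of the same abstract type as \eqref{eq:BCS1}, with homogeneous Neumann datum and $L^2(\Omega)^3$-valued initial condition $W^*(W\mathbf{y}(T) - \mathbf{y}_\Omega)$, so Lemma \ref{ExistLemma} supplies a unique $\mathbf{p} \in C([0,T];L^2(\Omega)^3) \cap L^2(0,T;X)$ with $\partial_t \mathbf{p} \in L^2(0,T;X^*)$. The same regularity holds for $\mathbf{w}$ by the same lemma, since $\mathbf{w}$ solves a linearized problem with inhomogeneity $\sum_i h_i B_i \mathbf{y} \in L^2(0,T;L^2(\Omega)^3)$ and boundary datum $\mathbf{n}\cdot[h_1,h_2]^T y_1 \in L^2(0,T;L^2(\partial\Omega))$.

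Next, using $\mathbf{w}(0) = 0$ and integration by parts in time (valid for $\mathbf{w}, \mathbf{p}$ in the above regularity class),
$$\langle \mathbf{p}(T), \mathbf{w}(T)\rangle_{L^2(\Omega)^3} = \int_0^T \left[\langle \partial_t \mathbf{p}, \mathbf{w}\rangle_{X^*,X} + \langle \mathbf{p}, \partial_t \mathbf{w}\rangle_{X,X^*}\right] dt.$$
I would substitute the weak form of the sensitivity equation for $\partial_t \mathbf{w}$ and the adjoint equation for $-\partial_t \mathbf{p}$, and perform a spatial integration by parts (Green's theorem) on the diffusion and advection acting on $w_1$ against $p_1$. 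The diffusion boundary contribution vanishes thanks to the homogeneous Neumann condition $\mathbf{n}\cdot\nabla p_1 = 0$, while the combined flux from the non-homogeneous no-flux datum on $w_1$ produces exactly $\int_0^T \langle \mathbf{n}\cdot[h_1,h_2]^T p_1,\, y_1\rangle_{L^2(\partial\Omega)} dt$. Because the differential operator in \eqref{eq:AdjointMacro1} is, by design, the formal adjoint of $A + \sum_i u_i B_i$ (with the off-diagonal $k_f$ and $k(t)H_\Gamma$ couplings transposed), all remaining interior integrals involving $\mathbf{w}$ cancel pair-by-pair, leaving only the forcing contribution $\int_0^T \langle \sum_i h_i B_i \mathbf{y}, \mathbf{p}\rangle_{L^2(\Omega)^3} dt$. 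Assembling the three surviving pieces yields \eqref{eq:objpan}.

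The main obstacle is the careful bookkeeping of the boundary and cross-coupling terms: the inhomogeneous Neumann datum for $\mathbf{w}$ (which carries the $\mathbf{h}$-dependence) must reconcile with the homogeneous Neumann datum for $p_1$ so that precisely the boundary integral in \eqref{eq:objpan} survives, and the off-diagonal $k_f$ and $k(t)H_\Gamma$ couplings among the components of $A$, $B_3$ and the adjoint must transpose correctly so that every interior term involving $\mathbf{w}$ cancels. A secondary subtlety is that the time-endpoint pairings $\langle \mathbf{p}(T), \mathbf{w}(T)\rangle$ rely on the continuous embedding into $C([0,T]; L^2(\Omega)^3)$ furnished by the energy estimate \eqref{eq:BCSenergy} in Lemma \ref{ExistLemma}, so that the pointwise evaluations at $t = 0, T$ are meaningful.
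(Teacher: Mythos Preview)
The paper does not actually prove this theorem; it is stated in Section~\ref{sec:diff} as a result imported from the authors' prior work \cite{elamvazhuthi2015optimal}, with only the subsequent remark on well-posedness of the adjoint (via the time reversal $\mathbf{p}^*(t)=\mathbf{p}(T-t)$) supplied here. So there is no proof in the paper to compare against.

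Your proposal is the standard and correct adjoint-equation argument for bilinear parabolic control problems: chain rule through $\Xi$, rewrite the terminal pairing via \eqref{eq:adjic}, integrate by parts in time using $\mathbf{w}(0)=0$, substitute the weak forms of the sensitivity and adjoint equations, and use Green's theorem spatially so that the homogeneous Neumann datum on $p_1$ together with the inhomogeneous flux datum on $w_1$ leaves exactly the boundary integral in \eqref{eq:objpan}, while the interior terms cancel because \eqref{eq:AdjointMacro1} is the formal adjoint of $A+\sum_i u_i B_i$. Your invocation of Lemma~\ref{ExistLemma} for the regularity needed to justify the time-integration by parts and the pointwise evaluations at $t=0,T$, and of the time-reversal trick for well-posedness of \eqref{eq:AdjointMacro1}--\eqref{eq:adjic}, matches precisely what the paper's remark provides. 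This is almost certainly the argument carried out in \cite{elamvazhuthi2015optimal}, so your proof fills the gap the paper leaves.
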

	
	\vspace{2mm}
	\begin{remark}
		The adjoint equation \eqref{eq:AdjointMacro1}-\eqref{eq:adjic} should not be confused with the backward heat equation, which is known to be ill-posed in $H^1(\Omega)$. The well-posedness of system \eqref{eq:AdjointMacro1}-\eqref{eq:adjic} can be established using a standard variable transformation of the form $\mathbf{p}^*(t) = \mathbf{p}(T-t)$. The resulting equation,
		\begin{eqnarray}
		&& \hspace{-2mm} \frac{\partial p^*_{1}}{\partial t} = \nabla \cdot (D \nabla p^*_{1}+ \mathbf{v}(t) p^*_{1}) + k(t) H_{\Gamma} (-p^*_{1}+p^*_{2}+p^*_{3}) \nonumber \\
		&& \hspace{1cm} in \hspace{2mm} Q, \nonumber \\
		&&  \hspace{-2mm} \frac{\partial p^*_{2}}{\partial t} =  k_{f}p^*_1-k_{f}p^*_2 \hspace{2mm} ~~ in \hspace{2mm} Q,\nonumber \\
		&& \hspace{-2mm}  \frac{\partial p^*_{3}}{\partial t} = 0 \hspace{2mm}~~ in \hspace{2mm} Q, 
		\label{eq:CAdjointMacro1}
		\end{eqnarray}
		with the Neumann boundary conditions
		\begin{equation}
		\mathbf{n} \cdot \nabla p^*_{1}= 0 \hspace{2mm} ~~on \hspace{2mm} \Sigma
		\label{eq:cadjbc}
		\end{equation} 
		and initial condition
		\begin{equation}
		\mathbf{p}^*(0) = W^*(W\mathbf{y}(\cdot,T)-\mathbf{y}_{\Omega}),
		\label{eq:cadjic}
		\end{equation}
		can be shown to have a unique solution $\mathbf{p}^*$ using arguments similar to those in the proof of Lemma \ref{ExistLemma}. 
	\end{remark}
	
	\section{Simulation Results and Discussion}
	\label{sec:simulat}
	
	In this section, we validate our approaches to the mapping assignment (Section \ref{sec:optMap}) and coverage assignment (Section \ref{sec:optCoverage}) with numerical simulations. 
	The PDE models of ensemble mapping and coverage, presented in Section \ref{sec:macrosec}, are numerically solved using the {\it method of lines}. In this method, the operators associated with the PDE are initially discretized in space. For this spatial discretization, we use a finite-difference and flux limiter based scheme on a uniform mesh. Time is viewed as a continuous variable, and the resulting system of ordinary differential equations (ODEs) is solved using built-in ODE solvers in MATLAB. See \cite{hundsdorfer2013numerical} for further details on this approach to numerically solving advection-diffusion-reaction type PDEs.
	
	
	\subsection{Mapping Assignment} \label{sec:mappingtasksim}
	
	We validate our mapping approach in two test cases.  In both scenarios, the domain is $\Omega = [0,100]^2~m^2$, the ensemble has $N = 30$ agents, the diffusion coefficient is $D = 10^{-4}~m^2/s$, and the agents' rate of recording observations over a region of interest is $k_o = 100~s^{-1}$, yielding a high probability rate of registering observations.  The agents' velocity field $\mathbf{v}(t)$ is defined to drive the agents along a lawnmower path, as illustrated by the two agent trajectories in \autoref{fig:AM}. This path is perturbed by stochastic fluctuations arising from the agents' diffusive motion.  
	\begin{figure}
		\centering
		\includegraphics[trim = 45mm 68mm 50mm 68mm, scale=0.4]{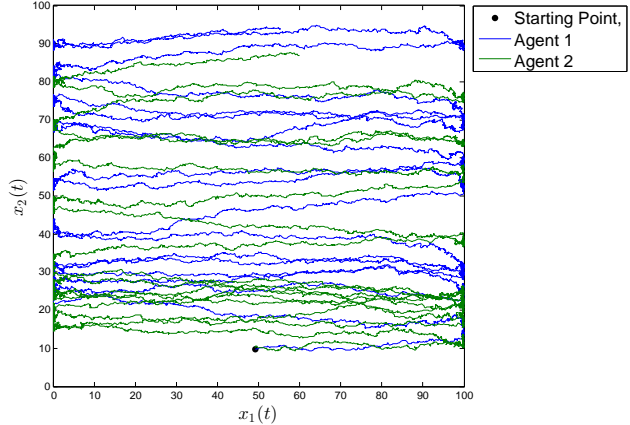}
		\caption{Trajectories of two agents during the mapping assignment.}
		\label{fig:AM}  
	\end{figure}  
	
	We define two environments, {\it Case 1} and {\it Case 2}, with different regions of interest, labeled {\it Actual $H_\Gamma(\mathbf{x})$} in \autoref{fig:Maps}.  The results of our mapping approach are shown in the plots labeled {\it Estimated $H_\Gamma(\mathbf{x})$}.  To use the estimated maps in simulations of our coverage approach (Section \ref{sec:covtasksim}), we thresholded the estimated $H_\Gamma(\mathbf{x})$ in both environments by defining the {\it Thresholded $H_\Gamma(\mathbf{x})$}, also shown in \autoref{fig:Maps}, as $H_{T}(\mathbf{x}) =1$ for each $\mathbf{x} \in \Omega$ such that $H_\Gamma(\mathbf{x}) \geq 0.5$ and $H_{T}(\mathbf{x})=0$ otherwise.  In both cases, our optimization method is able to reconstruct the spatial coefficient $H_\Gamma(\mathbf{x})$ with considerable accuracy, even though a relatively small number of robots was used ($N=30$).  The largest error in the estimates occurs in the top half of the  {\it Case 2} environment, which can be attributed to the increased dispersion of the agents, due to their diffusive motion, as they reach the upper portion of the domain. We would expect a larger ensemble of agents to generate a more accurate map, since the microscopic model converges to the macroscopic model as $N \rightarrow \infty$ \cite{zhang2017performance}.
	
	
	\begin{figure}
		\centering
		\subfigure[Case $1$]
		{      
			\includegraphics[trim = 70mm 75mm 60mm 80mm, scale=0.74]{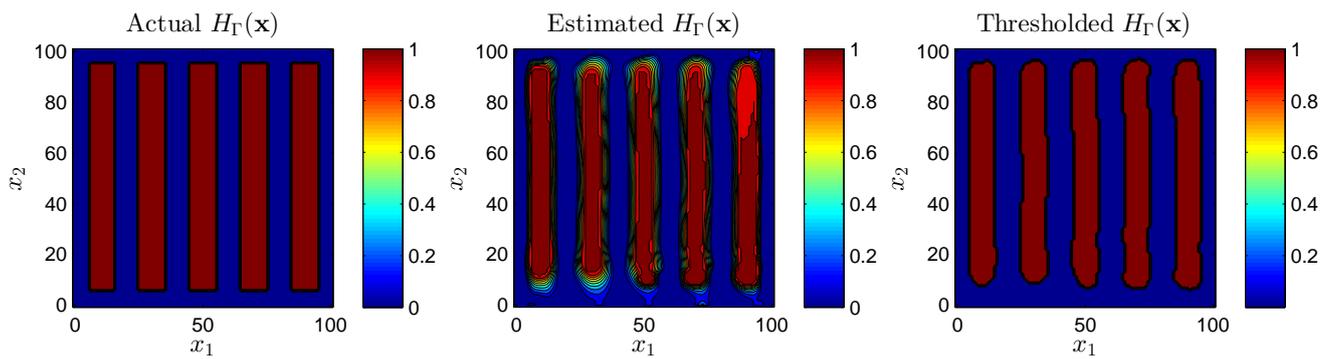}
			\label{fig:Map1} 
		}\\
		
		\subfigure[Case $2$]
		{      
			\includegraphics[trim = 70mm 75mm 60mm 70mm, scale=0.74]{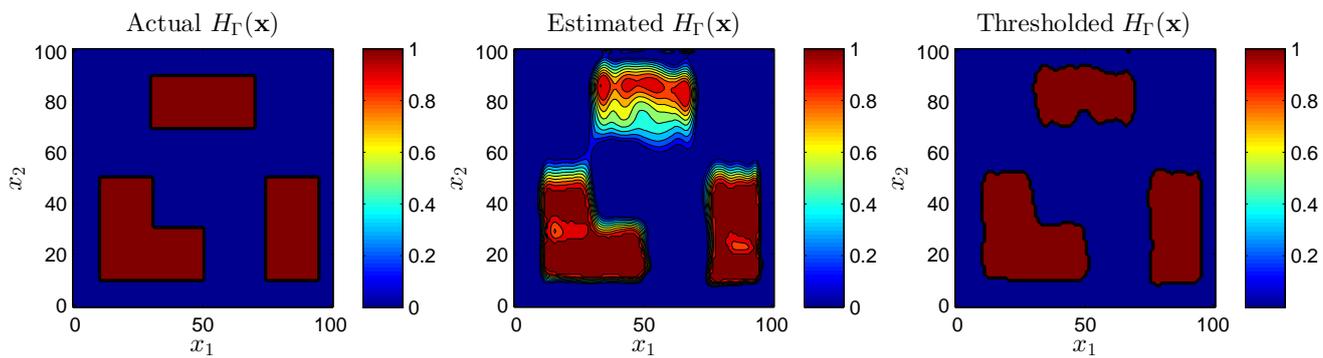}
			\label{fig:Map2} 
		}\\
		
		\caption{Actual, estimated, and thresholded maps of regions of interest.}
		\label{fig:Maps} 
		\vspace{15mm}
	\end{figure}

	\subsection{Coverage Assignment} \label{sec:covtasksim}
	
	We validate our coverage approach in three test cases, {\it Cases 1, 2,} and {\it 3}. The region of interest in {\it Case 1} is defined as {\it Thresholded $H_\Gamma(\mathbf{x})$} in Figure \autoref{fig:Map1}, and the region of interest in {\it Cases 2} and {\it 3} is defined as {\it Thresholded $H_\Gamma(\mathbf{x})$} in Figure \autoref{fig:Map2}.
	In all scenarios, the domain is $\Omega = [0,100]^2~m^2$, and the diffusion coefficient is $D = 5 \times 10^{-4} ~ m^2/s$. To illustrate the scalability of our control methodology to large numbers of agents, we simulate an ensemble with $N = 1000$ agents. All agents are initialized in the {\it moving} state, with an initial density given by a Gaussian distribution centered at $[x_{o1}~ x_{o2}] = [10 ~10]$. Thus, the initial conditions \eqref{eq:covIC} are:
	\begin{eqnarray}
	y_1(\mathbf{x},0) &=& A \exp \left(-\frac{(x_1-x_{o1})^2}{2\sigma_{x_1}^2}-\frac{(x_2-x_{o2})^2}{2\sigma_{x_2}^2}\right), \nonumber \\ 
	y_{2}(\mathbf{x},0) &=& 0,  \nonumber \\ 
	y_{3}(\mathbf{x},0) &=& 0,  \label{eq:MacroSi}
	\end{eqnarray}
	with $\sigma_{x_1} = \sigma_{x_2} =0.02$ and $A$ defined such that the Gaussian distribution integrates to $1$ over the domain.  The final time is set to $T = 800~s$ for {\it Cases 1} and {\it 2} and $T = 300~s$ for {\it Case 3}.
	
	We define $\mathbf{y}_\Omega$, the target spatial distribution in Problem \ref{coveragetask_meanfield}, as follows.  We partition the domain $\Omega$ into $P=20$ cells, denoted by $\{\Omega_{nm}\}$.  The cell $\Omega_{00}$ occupies the region $[0 , \frac{100}{P}] \times [0 , \frac{100}{P}]$, and all other cells are defined as $\Omega_{nm} = \big(100 \frac{m}{P},100 \frac{m+1}{P} \big] \times (100 \frac{n}{P},100 \frac{n+1}{P}]$, where $n,m \in \lbrace 0,1, ..., P-1 \rbrace$ and $n\neq$ 0 or $m\neq0$.  For {\it Case 1} and {\it Case 2}, we set the target number of instances of desired robot activity in each cell $\Omega_{mn}$ to be $z_{mn}^* = C \times \frac{\mu(\Omega_{nm} \hspace{1mm} \cap \hspace{1mm} \Gamma)}{\mu(\Omega_{mn})}$, where $C$ is a positive constant and $\mu$ is the Lebesgue measure on $\Omega$.  The target distribution of robot coverage activity, which is the third component of  $\mathbf{y}_\Omega$, is defined as $y^*_3(\mathbf{x},T) = \frac{C}{50}$ for all $\mathbf{x} \in \Gamma$.  Since we do not require the first two components of $\mathbf{y}_\Omega$ (the densities of {\it moving} and {\it stationary} robots) to reach target distributions at time $T$, we set the function $W$ in Problem \ref{coveragetask_meanfield} to be a diagonal operator matrix of the form $diag([0 \hspace{2mm} 0 \hspace{2mm}  I])$, where $I$ is the identity operator on $L^2(\Omega)$. 
	We specify the following sub-cases: $C = 450$ in {\it Case 1a} and {\it Case 2a}, and $C = 3600$ in {\it Case 1b} and {\it Case 2b}.  For {\it Case 3}, coverage activity is desired only in the upper half of the domain; we set $y^*_3(\mathbf{x},T) = 36$ for $\mathbf{x} \in \Gamma$ with $x_2 \geq 60$, and $y^*_3(\mathbf{x},T) = 0$ otherwise. 
	
	

	
	
	%
	%
	
	\autoref{fig:OBJfig} plots the time evolution of the objective function $J$ for each case.  For {\it Cases 1a,b} and {\it 2a,b}, the low values of $J$ at the final time $T=800~s$ indicate that the target coverage density is nearly achieved.  This is in part due to the accuracy of the thresholded estimate of $H_\Gamma(\mathbf{x})$ obtained from the mapping task (Section \ref{sec:mappingtasksim}). The lower error in the estimated $H_\Gamma(\mathbf{x})$ for {\it Case 1} compared to this mapping error for {\it Case 2} contributes to the better coverage performance (i.e., lower values of $J$) in {\it Case 1a} versus {\it Case 2a} and in {\it Case 1b} versus {\it Case 2b}.  
	For {\it Case 3}, the value of $J$ is higher at the final time $T=300~s$ than the values of $J$ for {\it Cases 1a,b} and {\it 2a,b} at $T=800~s$, indicating that the final coverage density is relatively farther from the target density. 
	The poorer coverage performance in {\it Case 3}, in which coverage activity is limited to a subset of the region of interest, is a consequence of the limited controllability of the system, which can be attributed to three factors.  First, only three control variables are used to control the PDE model, which is an infinite-dimensional dynamical system. Second, the system is constrained to achieve the target coverage density by time $T$.  Third, the time-dependent diffusion, reaction, and advection operators in the PDE model commute \cite{lanser1999analysis}, and this  commutativity of the control and drift vector fields degrades the controllability properties of the system \cite{agrachev2013control}. Thus, the ensemble would achieve better coverage performance in assignments with less stringent requirements on the system controllability properties. Such assignments include those in which the target coverage distribution is proportional to an environmental parameter, or in which the objective is to achieve a minimum coverage density in each region of interest rather than an exact coverage density.

	Figure \ref{fig:CovCases} plots the target coverage density $z^*_{mn}$ for {\it Cases 1b} and {\it 2a} alongside the corresponding expected coverage density, $y_3(\mathbf{x},T)$ from the macroscopic PDE model \eqref{eq:BCS1}, and the achieved coverage density, $z^{\Omega_{mn}}_3(T)$ 
	from the microscopic model, at the final time $T=800$ s. The lower value of the target coverage density in {\it Cases 1a, 2a} ($C=450$) than in {\it Cases 1b, 2b} ($C=3600$) results in larger stochastic fluctuations of the achieved coverage density around the expected coverage density. These larger fluctuations produce poorer coverage performance (higher values of $J$) in {\it Case 1a} versus {\it Case 1b} and in {\it Case 2a} versus {\it Case 2b}.  
	The plots in Figure \ref{fig:CovCases} display the relative degree of these fluctuations in {\it Case 1b} versus {\it Case 2a} at time $T=800~s$. The plots show that the achieved coverage density from the microscopic model approximates the expected coverage density from the macroscopic model, due to the large number of agents ($N=1000$) in the ensemble. As demonstrated in \cite{zhang2017performance}, the discrepancy between these two density fields will tend to zero as $N \rightarrow \infty$.

	

	\begin{figure}
		
		\centering
		
		\includegraphics[trim =21mm 10mm 10mm 10mm, scale=0.39]{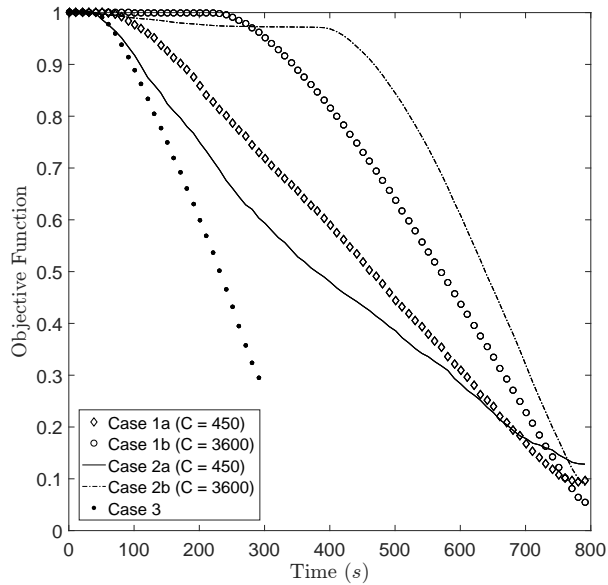}
		
		\caption{Time evolution of the objective function $J$ in Problem \ref{coveragetask_meanfield} for different coverage scenarios.}
		
		\label{fig:OBJfig}   
	\end{figure}

	\begin{figure*}
		\centering
		\subfigure[Case $1b$]
		{      
			\includegraphics[trim = 70mm 70mm 60mm 70mm, scale=0.74]{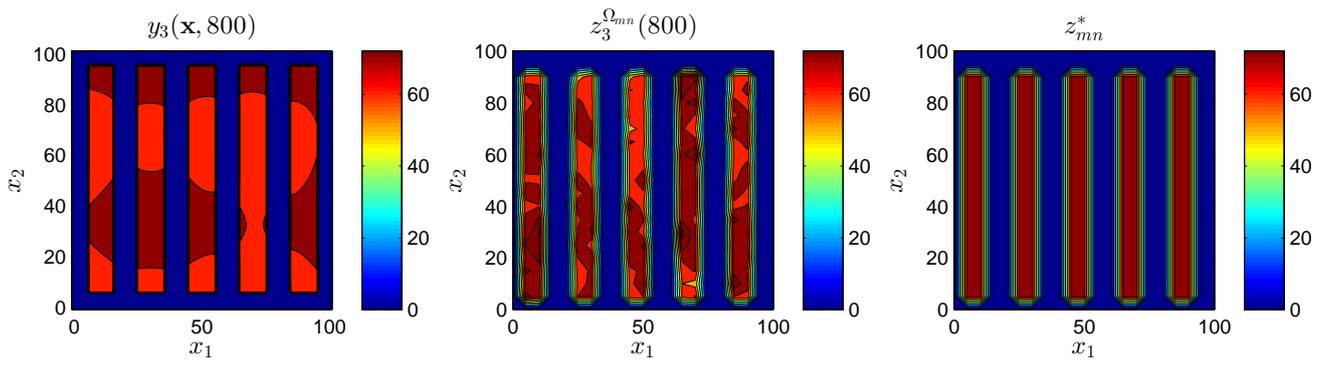}
			\label{fig:Case1b}  
		}\\
		
		\subfigure[Case $2a$]
		{      
			\includegraphics[trim = 70mm 70mm 60mm 70mm, scale=0.74]{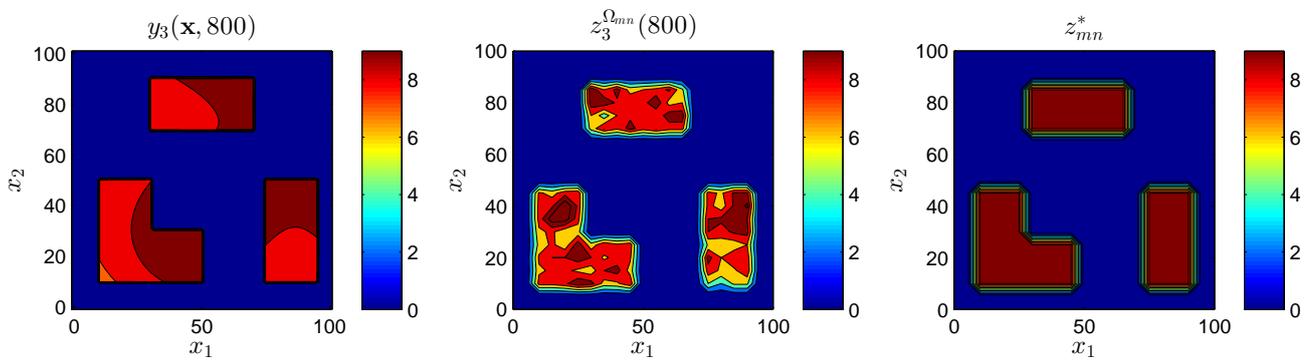}
			\label{fig:Case2a}  
		}\\
		\caption{Expected, achieved, and target densities of robot coverage activity for two scenarios.}
		\label{fig:CovCases}  
	\end{figure*}

	\section{Conclusion} \label{sec:conc}
	In this paper, we presented stochastic approaches to mapping and coverage assignments for an ensemble of autonomous robots in a PDE control framework. This framework enables the modeling and control of a robotic ensemble in a rigorous way that is scalable with the number of robots. We demonstrated that temporal data obtained by a small ensemble of diffusive agents can provide rich information about the spatial distribution of a region of interest, despite severe restrictions on the agents' sensing, localization, tracking, and computational capabilities.  We also showed that we can pose the coverage task as an optimal control problem that computes the agents' control inputs to achieve a target distribution of coverage activity over the previously mapped regions of interest.  
	
	In future work, we will investigate the controllability properties of the PDE models that we have presented in this paper. Additionally, we plan to incorporate pairwise interactions between agents, such as those defined by attraction-repulsion potentials, in order to increase the cohesiveness of the ensemble and improve the reachability properties of the system.
	The component of the robots' velocity field that is induced by pairwise interactions would be included in the advection term, resulting in a nonlinear PDE as the macroscopic model.  This type of model would require more advanced analytical tools than the ones used in this paper.
	
	
	
	
	
	
	
	
	\addcontentsline{toc}{part}{REFERENCES}
	\bibliographystyle{plain}        
	
	\bibliography{autosam2}  

\end{document}